\DeclareFontFamily{U}{mathx}{\hyphenchar\font45}
\DeclareFontShape{U}{mathx}{m}{n}{
      <5> <6> <7> <8> <9> <10>
      <10.95> <12> <14.4> <17.28> <20.74> <24.88>
      mathx10
      }{}
\DeclareSymbolFont{mathx}{U}{mathx}{m}{n}
\DeclareMathAccent{\widecheck}{0}{mathx}{"71}
\renewcommand{\vec}[1]{\boldsymbol{#1}}
\renewcommand{\v}{\vec{v}}
\newcommand{\LL}{\mathcal{L}}
\newcommand{\F}{\mathcal{F}}
\newcommand{\C}{\mathcal{C}}
\newcommand{\K}{\mathcal{K}}
\newcommand{\I}{\mathcal{I}}
\newcommand{\J}{\mathcal{J}}
\newcommand{\X}{\mathcal{X}}
\newcommand{\nue}{{\nu}}
\newcommand{\nud}{\widecheck{\nu}}
\newcommand{\grad}{\nabla}
\newcommand\p[2]{\frac{\partial #1}{\partial #2}}
\newcommand\pp[2]{\frac{\partial^2 #1}{\partial #2^2}}
\newcommand\ppp[2]{\frac{\partial^3 #1}{\partial #2^3}}
\newcommand\pppp[2]{\frac{\partial^4 #1}{\partial #2^4}}
\newcommand\STATE[1]{ s }
\newcommand\STATEin[1]{ s }
\newcommand\what[1]{\widehat{#1}}
\DeclareMathOperator*{\argmin}{arg\,min}
\newtheorem{theorem}{Theorem}
\newcommand{\Id}{\operatorname{Id}}
\def\tS{\widetilde{S}}
\def\tv{\widetilde{v}}
\def\tw{\widetilde{w}}
\def\tp{\widetilde{p}}
\def\tbv{\widetilde{\v}}
\def\tu{\widetilde{u}}
\def\Dpartial#1#2{ {\partial #1 \over \partial #2} }
\def\Dpartialmix#1#2#3{ {\partial^2 #1 \over \partial #2\,\partial #3} }
\def\Bmp#1{ \begin{minipage}{#1} }
\def\Emp{ \end{minipage} }
\def\Bmpc#1{ \begin{minipage}[c]{#1} }
\def\Bmpt#1{ \begin{minipage}[t]{#1} }
\def\Bmpb#1{ \begin{minipage}[b]{#1} }
\def\KK{{\mathbb{K}}}
\def\RR{{\mathbb{R}}}
\def\x{{\mathbf x}}
\def\bnabla{\boldsymbol{\nabla}}
\definecolor{Gray}{gray}{0.9}
\newcommand{\revtt}[1]{{\color{red}#1}}
\newcommand\strike[2]{\bgroup\markoverwith
{\textcolor{red}{\rule[.5ex]{2pt}{0.4pt}}}\ULon{#1} \revtt{#2}}
\begin{document}
\title{Optimal Closures in a Simple Model for Turbulent Flows}
 
\author{Pritpal Matharu and Bartosz Protas
\\ \\ 
Department of Mathematics and Statistics, McMaster University \\
Hamilton, Ontario, L8S 4K1, Canada}
\date{\today}

\maketitle

\begin{abstract}
  In this work we introduce a computational framework for determining
  optimal closures of the eddy-viscosity type for Large-Eddy
  Simulations (LES) of a broad class of PDE models, such as the
  Navier-Stokes equation. This problem is cast in terms of
  PDE-constrained optimization where an error functional representing
  the misfit between the target and predicted observations is
  minimized with respect to the {functional form of the eddy viscosity
    in the closure relation.} Since this leads to a PDE optimization
  problem with a nonstandard structure, the solution is obtained
  computationally with a flexible and efficient gradient approach
  relying on a combination of modified adjoint-based analysis and
  Sobolev gradients. By formulating this problem in the continuous
  setting we are able to determine the optimal closure relations in a
  very general form subject only to some minimal assumptions. The
  proposed framework is thoroughly tested on a model problem involving
  the LES of the 1D Kuramoto-Sivashinsky equation, where optimal
  {forms of the eddy viscosity} are obtained as generalizations of the
  standard Smagorinsky model. It is demonstrated that while the
  solution trajectories corresponding to the DNS and LES still diverge
  exponentially, with {such optimal eddy viscosities} the rate of
  divergence is significantly reduced as compared to the Smagorinsky
  model. By systematically finding {optimal forms of the eddy
    viscosity within a certain general class of closure} models, this
  framework can thus provide insights about the fundamental
  performance limitations of these models.
\end{abstract}

%\begin{keyword}
\begin{flushleft}
Keywords:
Large-Eddy Simulation; Closure Models; {Sub-Grid Stresses;} PDE Optimization; Adjoint Analysis
%\texttt{elsarticle.cls}\sep \LaTeX\sep Elsevier \sep template
%\MSC[2010] 00-01\sep  99-00
%\end{keyword}
\end{flushleft}

%\tableofcontents 

%***********************************************************************
%***********************************************************************

%%%%%%%%%%%%%%%%%%%%%%%%%%%%%%%%%%%%%%%%%%%%%%%%%%%%%%%%%%%%%%%%%%%%%%%%%%%%
% the first chapter - Introduction
\section{Introduction and Problem Statement} % Main section title
\label{sec:Introduction}

Turbulent flows at high Reynolds numbers continue to challenge both
scientists studying their fundamental properties and engineers
interested in diverse technical applications involving fluid
mechanics. In particular, accurate and efficient numerical simulation
of turbulent flows will for the foreseeable future remain an open
problem in computational science, despite advances in both algorithms
and computer architectures. This is because the solutions of the
three-dimensional (3D) Navier-Stokes equation, which is the main
mathematical model assumed to describe the motion of viscous
incompressible fluids, are chaotic and exhibit extreme spatio-temporal
complexity at Reynolds numbers characterizing developed turbulence.
% More specifically, such flows are known to involve complex
% interactions of motions at vastly different time- and length-scales
% \cite{frisch1995turbulence} which cannot be simultaneously resolved
% even with the largest available computational resources. Under the
% assumptions of spatial homogeneity and isotropy, it is convenient to
% consider solutions to the Navier-Stokes system as energy spectra in
% terms of their Fourier-space representation. In this context a
% standard paradigm for describing turbulent flows is expressed in terms
% of Kolmogorov's K41 theory \cite{Kolmogorov1941} and its various
% refinements \cite{frisch1995turbulence}, which stipulate that in the
% statistically stationary setting the energy is injected into the flow
% at large length scales of order $L > 0$ and is dissipated at some
% small length scales $0 < \eta \ll L$, where $\eta$ is typically
% referred to as the Kolmogorov length scale.  The energy is then
% cascaded through the range of intermediate wavenumbers $L^{-1}
% \lessapprox k \lessapprox \eta^{-1}$ via nonlinear interactions. 
With the Reynolds number defined as $Re = U L /\nu_N$, where $U$
  and $L$ are the characteristic velocity and length scale, and
$\nu_N$ is the coefficient of the kinematic viscosity of the fluid,
{a simple dimensional argument leads} to the conclusion that the
number of discrete degrees of freedom, {e.g., Fourier modes,}
necessary to resolve {a statistically isotropic and homogeneous
turbulent flow down to the smallest active length scales} $\eta$
scales as $\mathcal{O}(Re^{(9/4)})$ \cite{frisch1995turbulence}. This
hints at fundamental limitations on the largest Reynolds numbers for
which direct numerical simulation (DNS) can be performed on the 3D
Navier-Stokes system.

An approach which allows one to get around the aforementioned
difficulty and obtain approximate solutions of the flow problem in
some practical situations relies on the so-called Large-Eddy
Simulation (LES) in which one solves a suitably filtered version of
the governing equations. To define this approach, we consider {the
  linear filtering operation $\widetilde{\v}(\x) := (G_{\delta} \star
  \v)(\x) = \int_{\Omega} G_{\delta}(||\x - \x'||) \v(\x') \, dx'$,
  $\x \in \Omega$, where $\v = [v_1, v_2, v_3]^T$ and $\tbv = [\tv_1,
  \tv_2, \tv_3]^T$ represent the {original and filtered velocity
    fields}, defined in terms of some convolution kernel $ G_{\delta}
  \; : \; \RR^+ \rightarrow \RR^+$ in which $\delta > 0$ denotes the
  cut-off length scale} ({the symbol ``$:=$'' means ``equal to by
  definition''}).
% \begin{subequations} 
% \label{eq:NS}
% \begin{align}
% \Dpartial{v_i}{t} + \v\cdot\bnabla v_i & = 
% - \frac{1}{\rho}\Dpartial{p}{x_i} + {\nu_N} \Dpartialmix{v_i}{x_j}{x_j}, \quad i,j=1,2,3,
% \label{eq:NSa} \\
% \Dpartial{v_i}{x_i} &= 0,
% \label{eq:NSb}
% \end{align}
% \end{subequations}
{The LES formulation system is then obtained by applying this
  filtering operation to the Navier-Stokes system and takes the form}
\begin{subequations} 
\label{eq:LES}
\begin{align}
\Dpartial{\tv_i}{t} + \tbv\cdot\bnabla \tv_i
& = - \frac{1}{\rho} \Dpartial{\tp}{x_i} + {\nu_N} \Dpartialmix{\tv_i}{x_j}{x_j} + 
\Dpartial{}{x_j} M_{ij}, \quad i,j=1,2,3,
\label{eq:LESa} \\
\Dpartial{\tv_i}{x_i} &= 0,
\label{eq:LESb}
\end{align}
\end{subequations}
where {$\tp$ is the filtered pressure field and $\rho$ denotes
  the constant density (here and below, Einstein's convention is used
  with repeated indices implying summation; in addition, for brevity,
  we omit here the specification of the flow domain $\Omega$ together
  with the initial and boundary conditions which are assumed generic).
  The quantity} $M_{ij} := \tv_i \tv_j - \widetilde{v_i v_j}$,
$i,j=1,2,3$, is by analogy with the dissipative term already present
in the Navier-Stokes system referred to as the ``subgrid-scale'' (SGS)
stresses \cite{davidson2015turbulence}.
% (the subgrid stresses are closely
% related to the Reynolds stresses arising in the context of the
% Reynolds-averaged Navier-Stokes (RANS) equations). 
System \eqref{eq:LES} describes {the} evolution of the filtered
(large-scale) velocity field $\tbv$ and, evidently, is not closed
because the SGS stresses depend on the original (unfiltered) velocity
field $\v$.  Since the filtering operation defined by $G_{\delta}$
typically involves attenuation of velocity components with length
scales {smaller} than $\delta$, the SGS stresses thus represent the
averaged effect of these neglected motions on the evolution of the
resolved flow field $\tbv$.  In order to close system \eqref{eq:LES}
one {therefore} needs to represent the SGS stresses in terms of the
resolved field $\tbv$ in some way, which constitutes the celebrated
``turbulence closure problem'' \cite{pope2000turbulent}.

There is a very large body of results concerning the closure problem
{formulated} in different flow conditions, especially in the
engineering literature. Even a brief survey of these results would be
outside the scope of the present study and we refer the reader to the
monographs \cite{pope2000turbulent,Sagaut2006} for more information.
Arguably, the most commonly used family of closure models is of the
eddy-viscosity type in which the SGS stresses are expressed as
\cite{smagorinsky1963general}
\begin{equation}
M_{ij} = {\nu} \tS_{ij}, \quad \text{where} \quad 
\tS_{ij} {:=} \frac{1}{2}\left(\Dpartial{\tv_i}{x_j} + \Dpartial{\tv_j}{x_i} \right),
\label{eq:Sm}
\end{equation}
in which ${\nu}$ is the eddy viscosity (to simplify the notation used
below and in contrast to the {commonly employed} convention, we choose
to adopt a simple symbol for the eddy viscosity and put a subscript on
the {kinematic} viscosity.  The approaches to determining this
quantity can be classified as {\em algebraic}, in which some algebraic
relation is {postulated} between the filtered field $\tbv$ and the
eddy viscosity ${\nu}$ (such as the celebrated Smagorinsky model
discussed below), and {\em differential}, in which the eddy viscosity
is {assumed to depend on} some additional quantities whose transport
is described by suitable partial-differential equations (PDEs), {such
  as,} e.g., the fluctuating kinetic energy $k$ and dissipation
$\epsilon$ in the $k-\epsilon$ model {often used as a closure in
  Reynolds-averaged Navier-Stokes (RANS) equations}.  Since the SGS
stresses are assumed {in \eqref{eq:Sm}} to depend on the strain field
$\tS_{ij}$, the eddy-viscosity models have a similar structure to the
dissipative term {$\nu_N \Delta \v$} already present in the
Navier-Stokes, so its inclusion {in the equation} has the effect of
changing the coefficient of this term {from $\nu_N$} to $({\nu_N} +
{\nu})$.  We note however that, unlike the {kinematic} (molecular)
viscosity {coefficient} ${\nu_N}$ which is constant, the eddy
viscosity ${\nu}$ depends on the resolved field $\tbv$ and therefore
introduces an additional nonlinearity. {In addition to the classical
  Smagorinsky model,} there exist many other approaches to approximate
the eddy viscosity, including dynamic Smagorinsky models
\cite{lilly1992proposed} relying on Germano's commutator identity
\cite{germano_1992} and the structure-function models
\cite{Lesieur1996}, to mention just {two}.  For a survey of recent
progress in the field of turbulence modelling we refer the reader to
\cite{Durbin2018}.
%, whereas some recent algorithmic developments in the solution of the LES equations \eqref{eq:LES} are discussed in \cite{DAIRAY2017252,FLAD2017782}.
Regardless of details, in all cases these closure models are
postulated based on empirical grounds, albeit usually with a strong
physical justification, with a small number of parameters {requiring}
calibration from data.

Since most closure models are derived in {a} heuristic manner, such
approaches to turbulence modelling have been sometimes criticized as
lacking scientific rigor and therefore difficult to generalize to
flows different from the ones for which they have been calibrated. The
objective of the present study is to provide insights about how well
{eddy-viscosity} {closure} models can in principle perform.
{This is achieved by finding, via solution of a suitable optimization
  problem, a} mathematically optimal form of the eddy viscosity for a
given flow.  More precisely, while in algebraic closure models a
simple relationship is typically postulated for the dependence of the
eddy viscosity ${\nu}$ on the resolved flow field $\tbv$ involving a
small number of free parameters, in our proposed approach we will
determine the functional form of this dependence optimally in a very
general setting subject only to some minimal assumptions.

To fix attention, we will consider what is arguably the most common
algebraic closure model, namely, the Smagorinsky model postulating the
eddy viscosity in the form ${\nu} = C_s^2 \delta^2 \, \left(2 \tS_{ij}
  \tS_{ij} \right)^{1/2}$ in which $C_s$ is an adjustable constant
known as the Smagorinsky coefficient \cite{smagorinsky1963general}.
Although the Smagorinsky model is rather simple, it is quite popular
and serves as the ``workhorse'' for many LES computations. It is
known, however, to possess certain {deficiencies} 
%being too dissipative near walls \cite{davidson2015turbulence}.
{such as assuming} the eddy viscosity to be zero if the resolved
strain $\tS$ vanishes and the fact that the eddy viscosity is positive
{otherwise}, implying that the closure is strictly dissipative
\cite{rodi2013large}.  In {our} {study} we introduce a
computational framework {for determining} an optimal Smagorinsky
model in which the eddy viscosity is allowed to have a very general
functional dependence on the magnitude of the resolved strain field
$\left(2 \tS_{ij} \tS_{ij} \right)^{1/2}$ found by matching the
predictions of the LES model against a given ``target'' field (e.g.,
obtained by solving the original Navier-Stokes problem via DNS or from
an experiment). Since this eddy viscosity is {\em optimal} within the
class of Smagorinsky-type models, its properties will provide insights
about how well this class of models can in principle perform.

There have been earlier attempts to determine turbulence closure
models {with some optimality properties}.  Langford \& Moser
\cite{langford1999optimal} and then Das \& Moser \cite{das2002optimal}
developed an approach for isotropic turbulence in which {motions
  at} subgrid scales were treated as stochastic and the closure was
determined by minimizing the modelling error using stochastic
estimation techniques.  This approach was tested on a range of models,
including a stochastically forced one-dimensional (1D) Burgers
equation and 3D Navier-Stokes system.

An emerging family of approaches uses various machine-learning
techniques such as neural networks to deduce closure models with
certain optimality properties from data. In this context we mention
the investigations
\cite{GamaharaHattori2017,ling_kurzawski_templeton_2016,PARISH2016758},
whereas the state-of-the-art in this field is discussed in the review
papers \cite{duraisamy2018turbulence,jimenez_2018,kutz_2017}.
Data-driven machine-learning methods, in addition to other data-driven
{techniques}, have been utilized for computational prediction,
{modelling}, and diagnosis of various turbulent flows
\cite{Maulik2018,PanDuraisamy2018,Iliescu2018ROM}. We note that, while
the approach developed in the present study can also be classified as
``data-based'', it does not rely on machine learning, but rather on
the calculus of variations and rigorous methods of PDE-constrained
optimization. More specifically, recognizing that closure models are
in fact forms of constitutive relations, we {extend} the method
developed in \cite{bukshtynov2013optimal,bukshtynov2011optimal} to
infer optimal constitutive relations from data. In the context of
hydrodynamics such techniques have already been used to tackle the
simpler problem of finding optimal closures for finite-dimensional
reduced-order models in \cite{pnm14,protas_noack_osth_2015} and in
vortex dynamics \cite{dp15a}. Applications of this approach in the
field of electrochemistry are discussed in
\cite{sethurajan2015accurate}.

Since {our goal here} is to provide a ``proof of the concept''
for {the proposed} approach by introducing and validating it in a
general context, we shall focus on a 1D model problem which will allow
us to avoid the technical complexities inherent in dealing with the 3D
Navier-Stokes system. We will thus consider the Kuramoto-Sivashinsky
equation defined on the periodic domain $\Omega = [0,2\pi]$
\begin{subequations} 
\label{KS}
\begin{align}
&\p{w(t, x)}{t} + {\nu_4} \, \pppp{w(t, x)}{x} + \nu_2 \, \left[\pp{w(t, x)}{x} + w(t, x) \, \p{w(t, x)}{x}\right] = 0,& & (t, x) \in (0, T] \times \Omega,  \label{KS_1}\\
&\frac{\partial^{(i)} w}{\partial x^{(i)}}(t, 0) = \frac{\partial^{(i)} w}{\partial x^{(i)}}(t, 2\pi),& & i = 0, \dotsc, 3, \\
&w(0, x) = w_0(x),
\end{align}
\end{subequations} 
where $T > 0$ is the length of the time window considered, ${\nu_4},
\nu_2 \in \RR^{+}$ are parameters whereas $w_0$ is an appropriate
initial condition. The reason for choosing this particular model
problem is that its solutions are known to exhibit important features
characteristic of actual turbulence governed by the 3D Navier-Stokes
system, namely, chaotic {and multiscale dynamics} with {significant}
spatio-temporal complexity. These properties arise from an interplay
between the linear and nonlinear terms in \eqref{KS_1}: the
second-order negative diffusion term injects energy at {large
  scales} which is then transferred by the nonlinear interactions to
{small scales} where it is eventually dissipated by the
fourth-order dissipative term.  Unlike the Burgers equation, in which
a similar behavior may only arise from the inclusion of a somewhat
artificial stochastic forcing term \cite{bk07}, the
Kuramoto-Sivashinsky equation intrinsically exhibits a more
turbulence-like behavior.  Originally, this equation was proposed as a
model for instabilities {of} interfaces and flame fronts
\cite{sivashinsky1988nonlinear}, and ``phase turbulence'' in chemical
reactions \cite{kuramoto1978diffusion}. Beyond its original purpose,
this equation has been used as a model for hydrodynamic turbulence and
is commonly {employed} as a testbed to study new approaches
\cite{MR2920624}.

The structure of the paper is as follows: in the next section we
introduce the {problem of finding an optimal form of the eddy
  viscosity} in the context of the Kuramoto-Sivashinsky equation
\eqref{KS}; its solution based on a gradient approach is described in
a general context in Section \ref{sec:optimal}, whereas the set-up of
the particular problem considered here is described in Section
\ref{sec:Problem}; details of the numerical approach are presented in
Section \ref{sec:Numerical}; our computational results are discussed
in Section \ref{sec:Results}, whereas a summary and final conclusions
are deferred to Section \ref{sec:final}; {proof of a key
  {result} is provided} in an appendix.

%%%%%%%%%%%%%%%%%%%%%%%%%%%%%%%%%%%%%%%%%%%%%%%%%%%%%%%%%%%%%%%%%%%%%%%%%%%%
% the second chapter - Optimal Closure
\section{\label{sec:Model_Problem}{Eddy-Viscosity} Closures for
  the Kuramoto-Sivashinsky Equation} % Main section title

In this section we formulate an LES system corresponding to the
Kuramoto-Sivashinsky system \eqref{KS} where the closure uses a
Smagorinsky-type eddy-viscosity model of the general form
\eqref{eq:Sm} reduced to 1D.  We must first define the filtering
operation to extract the resolved scales from the solutions and
{for this purpose we} shall use a sharp spectral filter also
employed in \cite{das2002optimal}. It is defined in terms of the
Fourier transform $\widehat{G}_{\delta}(k)$ of its kernel (hats
``\;$\widehat{\cdot}$\;'' will hereafter denote Fourier coefficients)
\begin{equation} 
\label{eq:hG}
\widehat{G}_{\delta}(k) =
\begin{cases}
1, &|k| \leq k_{\text{max}} \\
0, &\text{otherwise}
\end{cases},
\end{equation}
where the {cut-off} length scale $\delta = 2 \pi /
k_{\text{max}}$ and $k_{\max}$ is the maximum resolved wavenumber.
Clearly, \eqref{eq:hG} defines a low-pass filter which removes all
Fourier components with wavenumbers larger than $k_{\text{max}}$.
Applying this filter to \eqref{KS_1}, we obtain the filtered version
of the Kuramoto-Sivashinsky equation
\begin{equation}
\label{eq:KSles}
\p{{\tw}}{t} + {\nu_4} \, \pppp{ \tw}{x} + \nu_2 \, \Big[\pp{\tw}{x} + \frac{1}{2} \p{{(\widetilde{\tw \tw})}}{x}\Big] + \p{M(w)}{x} = 0,
\end{equation}
where $\tw := G_{\delta} \star w$ and the last term represents the
effect of the SGS stresses
\begin{equation} 
\label{eq:Mw}
M(w) := \frac{\nu_2}{2} \left[ \, \widetilde{ww} - \widetilde{\tw\tw} \,  \right].
\end{equation}
{The reason for the sign difference with respect to the
  corresponding expression in \eqref{eq:LESa} is the sign of the
  dissipative term in \eqref{eq:KSles}.}  For clarity, we will
{hereafter} use the symbol $\tu$ to denote the solution of the
LES problem for the Kuramoto-Sivashinsky equation, which should be
contrasted with $\tw$ obtained by filtering the solution $w$ of the
DNS problem {\eqref{KS}}.  Since expression \eqref{eq:Mw} depends
on the original unresolved field $w$, it must be modelled in terms of
$\tw$ in order to close equation \eqref{eq:KSles}. For this purpose we
will use a 1D analogue of the {eddy-viscosity} closure model \eqref{eq:Sm} adapted to
the Kuramoto-Sivashinsky equation, namely,
\begin{equation} 
\label{eq:Mtw}
M = {\nu}(|{s}|) \, \ppp{{\tu}}{x}, \quad \text{where} \quad {s} := \p{{\tu}}{x},
\end{equation}
in which ${\nu(|s|)}$ is the eddy viscosity. The ansatz in
\eqref{eq:Mtw} is chosen such that the order of derivatives in the
resulting model term will match the order (four) of the dissipative
term in the Kuramoto-Sivashinsky system \eqref{KS}. The filtered
Kuramoto-Sivashinsky system \eqref{eq:KSles} equipped with such a
closure model then becomes the LES system with the following form
\begin{subequations}
\label{eq:KSLES}
\begin{align}
&\p{ {{\tu}}}{t} + {\nu_4} \, \pppp{{\tu}}{x} + \nu_2 \Big[\pp{{\tu}}{x} + \frac{1}{2} \p{({{\widetilde{\tu \tu}}})}{x}\Big] + \frac{\partial}{\partial x} \Big[{\nu}(|{s}|)\, \ppp{{\tu}}{x}\Big] = 0, \label{eq:KSLESa} \\
&\frac{\partial^{(i)} {\tu}}{\partial x^{(i)}}(t, 0) = \frac{\partial^{(i)} {\tu}}{\partial x^{(i)}}(t, 2\pi), \quad \quad i = 0, \dotsc, 3, \label{eq:KSLESb} \\
&{\tu}(0, x) = \tw_0(x). \label{eq:KSLESc}
\end{align}
\end{subequations}
{We now introduce two important definitions:}
\begin{itemize}
\item $\I := [\alpha, \beta]$, where $\alpha := \min_{x\in{\Omega}, \
    t \in [0,T]} |{s}|$ and $\beta := \max_{x\in{\Omega}, \ t \in
    [0,T]} |{s}|$, referred to as the ``identifiability interval'', is
  the range of values {attained} by the magnitude of the resolved
  strain $|{s}|$ {in} the solution of the LES problem
  \eqref{eq:KSLES} with the initial data $\tw_0$ over the time
  interval $[0,T]$,

\item $\LL := [a, b]$, where $a < \alpha$ and $b > \beta$, will serve
  as the domain of definition of the function defining the eddy
  viscosity, i.e., ${\nu} \; : \; \LL \rightarrow \RR$; since the
  identifiability interval will in general depend on the initial data
  $\tw_0$ and the length of the time window $T$, i.e., $\I =
  \I(\tw_0,T)$, it is important to choose the domain $\LL$ such that
  it will contain all possible identifiability intervals, that is
  {such that} $\cup_{\tw_0,T} \I(\tw_0,T) \subset \LL$, as this
  will ensure that the eddy viscosity is always defined; in practice,
  it is convenient to adopt a larger domain $\LL$ possibly also
  including points outside any identifiability interval $\I(\tw_0,T)$,
  i.e., $\LL \setminus \cup_{\tw_0,T} \I(\tw_0,T) \neq \varnothing$;
  with this in mind, we shall set $a = \inf_{x\in\Omega, \ t \in
    [0,T]} |s| = 0$ and $b > \sup_{x\in\Omega, \ t \in [0,T]} |s|$.
\end{itemize}

The counterpart of the Smagorinsky model in the present setting will
then take the form
\begin{equation}
{\nu}({s}) =  C_s^2 \delta^2 |{s}|.
\label{eq:nusKS}
\end{equation}
{Our goal now} is to find an optimal form for the eddy viscosity
as a function of the resolved strain ${s}$, ${\nu} = {\nu}(|{s}|)$,
generalizing the Smagorinsky model \eqref{eq:nusKS}. This eddy
viscosity will be {\em optimal} in the sense that the corresponding
solutions of the LES system \eqref{eq:KSLES} will be as close as
possible (in a suitable least-squares sense) to solutions of the
original Kuramoto-Sivashinsky system \eqref{KS} obtained for the same
initial data $w_0$. Formulation and solution of this optimization
problem are presented below.

\section{Optimization Approach to finding Eddy Viscosity}
\label{sec:optimal}

{In this section we first state the optimization problem which
  will be used to determine the optimal form of the eddy viscosity. It
  is formulated here in a very general continuous setting and to solve
  this problem we use a gradient-descent approach in which the key
  element is a suitably-defined gradient representing the sensitivity
  of solutions to the LES system \eqref{eq:KSLES} to modifications of
  the functional form of the eddy viscosity.  Finally, we ensure that
  these gradients are sufficiently smooth such that the resulting
  optimal eddy viscosity will be well defined.}

{Starting from some initial guess $\nu_0$, the optimization
  procedure will iteratively adjust} the eddy viscosity such that the
corresponding solutions ${\tu}$ of the LES problem \eqref{eq:KSLES}
will match as closely as possible the ``true'' evolution governed by
the original Kuramoto-Sivashinsky system \eqref{KS}, i.e., the DNS.
To fix attention, this matching will be determined in terms of $N$
``observations'' made by applying suitably-defined observation
operators $H_{i} \; : \; {H^1}(\Omega) \longrightarrow
\mathbb{R}$, $i = 1, \dots, N$, to the LES and DNS solutions,
${\tu}(t,\cdot)$ and $w(t,\cdot)$, continually for all $t \in [0,T]$.
{The symbol ${H^1}(\Omega)$ denotes the Sobolev space of
  continuous functions with square-integrable gradients \cite{af05}.}
We will use this general formulation to introduce our approach here
and will define specific forms of these observation operators in
Section \ref{sec:Problem} which will then be used in our computational
examples in Section \ref{sec:Results}.  The ``target'' observations
will thus have the form $m_i(t) := H_{i} w(t,\cdot)$, $i = 1, \dots,
N$.

We see that in order for the LES system \eqref{eq:KSLES} to be
satisfied in the classical (strong) sense, the eddy viscosity
${\nu}(|{s}|)$ must possess certain minimum regularity as a function of
${s}$. Due to some technical reasons which will become apparent below,
we must have
\begin{equation} 
\label{eq:nue_reg}
{\nu}(|{s}|) \ \textnormal{piecewise $C^3$ on} \ \mathcal{L}.
\end{equation}
Since optimization problems are most conveniently formulated in
Hilbert spaces \cite{l69}, we will assume the eddy viscosity
${\nu}(|{s}|)$ to be an element of the Sobolev space $H^3(\LL)$ of
functions defined on $\LL$ with square-integrable third derivatives
\cite{af05} (a precise definition of the inner product in this space
will be provided below). The objective functional $\J \; : \; H^3(\LL)
\rightarrow \RR$ will therefore have the form of the least-squares
error between the target observations $\{ m_i(t) \}_{i=1}^N$ and the
corresponding observations of solutions ${\tu}$ of the LES problem
\eqref{eq:KSLES} obtained for the given eddy viscosity ${\nu}$, i.e.,
\begin{equation} 
\label{eq:J}
\J({\nu}) = \frac{1}{2} \int_0^T \sum_{i = 1}^{N} \left[ m_i(t) - H_{i} {\tu}(t, x; \nue) \right]^2 \ dt,
\end{equation}
such that the optimization problem takes the form
\begin{equation} 
\label{eq:minJ}
\nud {:=} \underset{\nue \in H^3(\LL)} {\argmin} \J(\nue),
\end{equation}
where $\nud$ is the optimal eddy viscosity.

To find a local minimizer of \eqref{eq:J}, we shall use a
gradient-based optimization approach in which the optimal eddy
viscosity $\nud$ can be computed iteratively as $\nud = \underset{n
  \to \infty}{\text{lim}} \nu^{(n)}$, where
\begin{align} 
\label{eq:desc}
\begin{cases}
\nu^{(n+1)} &= \nu^{(n)} - \tau^{(n)} \, \grad_{\nue}\J(\nu^{(n)}), \quad \quad n=1, 2, \dots, \\
\nu^{(1)} &= \nu_0,
\end{cases}
\end{align}
in which $\grad_{\nue}\J(\nue)$ is the gradient of the cost functional
\eqref{eq:J} with respect to the eddy viscosity $\nue$, $\tau^{(n)}$
is the step length along the descent direction at the $n$th iteration{,}
and $\nu_0$ is the initial guess for the eddy viscosity. An optimal
step size $\tau^{(n)}$ can be found by solving the following
line-minimization problem \cite{nw00}
\begin{equation}
\tau^{(n)} = \argmin_{\tau > 0} \J(\nu^{(n)} - \tau \, \grad_{\nue}\J(\nu^{(n)})).
\label{eq:taun}
\end{equation}
We add that due to the local nature of this approach, iterations
\eqref{eq:desc} can only produce local minimizers and determining
whether any of them is also a global minimizer is in general not
possible.  A key element of the gradient-descent approach
\eqref{eq:desc} is evaluation of the gradient
$\grad_{\nue}\J(\nu^{(n)})$ and this step is discussed below.

\subsection{Adjoint-Based Gradients}
\label{sec:adjoint}

While adjoint calculus has had a long history in PDE-constrained
optimization starting with \cite{control:lions1}, the optimization
problem defined by \eqref{eq:KSLES}, \eqref{eq:J} and \eqref{eq:minJ}
has in fact a somewhat non-standard structure and therefore requires
special techniques. The reason is that the control variable ${\nu}$ in
\eqref{eq:minJ} is a function of ${s}$, which {itself is} a function
of the {\em dependent} variable ${\tu}$ in \eqref{eq:KSLES},
cf.~\eqref{eq:Mtw}, whereas standard adjoint-based methods allow one
to solve PDE optimization problems in which the control {is a function
  of the {\em independent} variables only} (here, $x$ and $t$).  A
generalization of the adjoint-based approach overcoming this
limitation was developed in
\cite{bukshtynov2013optimal,bukshtynov2011optimal} and {in the present
  study we adopt a variant of this technique with a number of
  modifications. Most importantly, here the eddy viscosity is a
  function of the {\em magnitude of the gradient of} the state
  variable, cf.~\eqref{eq:Mtw}, rather than of the state variable
  itself, which leads to additional steps in the derivation of the
  adjoint sensitivities. Moreover, increased regularity requirements
  imposed on the eddy viscosity, cf.~\eqref{eq:nue_reg}, result in a
  more complicated form of the system defining the Sobolev gradients
  whose solution in turn necessitates a more refined numerical
  approach than used in
  \cite{bukshtynov2013optimal,bukshtynov2011optimal}.  Here we present
  key elements only of our approach} and the reader is referred to
Section \ref{sec:grad} for {a proof of the main result}. We begin by
computing the G\^{a}teaux (directional) differential of the cost
functional \eqref{eq:J} with respect to ${\nu}$
\begin{align} 
\label{eq:dJ}
\mathcal{J}'({\nu}; \nu') :=& \lim_{\epsilon\to0} \frac{\mathcal{J}({\nu} + \epsilon \nu') - \mathcal{J}({\nu})}{\epsilon} = \frac{d}{d\epsilon} \ \mathcal{J}\Big({\nu} + \epsilon \nu'\Big) \Big|_{\epsilon = 0}\nonumber \\
=& \int_0^T \sum_{i = 1}^{N} [H_{i}{\tu}(t, x; {\nu}) - m_i(t)] \, H_{i}{\tu'}(t, x; {\nu}, \nu') \, dt,
\end{align}
where $\nu'$ is a perturbation of the eddy viscosity ${\nu}$ and
${\tu'}(t, x; \nue, \nu')$ satisfies the corresponding linear
perturbation system obtained from \eqref{eq:KSLES}, cf.~relations
\eqref{eq:pert}--\eqref{eq:dKSLES} in Section \ref{sec:grad}.  The
(local) minimizer {$\nud$} of \eqref{eq:minJ} requires the directional
derivative \eqref{eq:dJ} to vanish for all perturbations $\nu'$, i.e.,
$\underset{\nu' \in {H^3}(\LL)}{\forall} \ \J'(\nud; \nu') = 0$.  Away
from the minimizer $\nud$ we can use the G\^{a}teaux differential to
obtain the gradient $\grad_{\nue}\J$ required by the descent algorithm
\eqref{eq:desc}.  To do this, we invoke the Riesz representation
theorem \cite{b77} and the fact that the directional derivative
\eqref{eq:dJ} is a bounded linear functional when viewed as a function
of $\nu'$, to obtain
\begin{equation} 
\label{eq:Riesz}
\mathcal{J}'({\nu}; \nu') = \left\langle \grad_{\nue}\mathcal{J}, \nu' \right\rangle_{\X(\LL)},
\end{equation}
where $\langle \cdot, \cdot \rangle_{\X}$ is an inner product in the
Hilbert space $\X$. As regards the choice of this space, we will
consider $\X = L^2(\LL)$ and $\X = H^3(\LL)$ endowed with the
following inner products 
\begin{subequations}
\label{eq:ip}
\begin{alignat}{2} 
&\forall_{p_1,p_2 \in L^2(\mathcal{L})} & \quad
\left\langle p_1, p_2 \right\rangle_{L^2(\mathcal{L})} 
& = \int_{a}^b p_1 \, p_2  \, ds, \label{eq:ipL2} \\
&\forall_{p_1,p_2 \in H^3(\mathcal{L})} & \quad
\left\langle p_1, p_2 \right\rangle_{H^3(\mathcal{L})} 
& = \int_{a}^b p_1 \, p_2 + \ell_1^2 \, \frac{dp_1}{ds}\frac{dp_2}{ds} + \ell_2^4 \, \frac{d^2p_1}{ds^2}\frac{d^2p_2}{ds^2} + \ell_3^6 \, \frac{d^3p_1}{ds^3}\frac{d^3p_2}{ds^3} \, ds, \label{eq:ipH3}
\end{alignat}
\end{subequations}
where $\ell_1$, $\ell_2$ and $\ell_3$ in \eqref{eq:ipH3} are
``length-scale'' parameters (we note that as long as $0 < \ell_1,
\ell_2, \ell_3 < \infty$, the corresponding inner products are
equivalent, in the precise sense of norm equivalence). {While the
  Sobolev gradient obtained in the space $H^3(\LL)$ must be used in
  computations in \eqref{eq:desc}, it is convenient to first derive
  the gradient defined with respect to the $L^2$ topology.}

We note that relation \eqref{eq:dJ} is not consistent with the Riesz
form \eqref{eq:Riesz}, because the perturbation $\nu'$ does not appear
in it explicitly as a factor, but is instead hidden in the
perturbation equation \eqref{eq:dKSLESa}. {However, as
  demonstrated by the theorem stated below,} relation \eqref{eq:dJ}
can be transformed to the desired Riesz form \eqref{eq:Riesz} in which
{$\X = L^2(\LL)$, allowing us to identify the corresponding
  gradient of the cost functional.
\begin{theorem}
  Suppose $\nu' \in H^3(\LL)$ and $\X = L^2(\LL)$. Then, the G\^ateaux
  differential admits a Riesz representation \eqref{eq:Riesz} in which
  the $L^2$ gradient is given by
\begin{equation} 
\label{eq:gradL2}
\grad_{\nu}^{L^2}\J(s) 
= - \frac{d}{ds}\int_{0}^T \int_{0}^{2\pi} \Xi_{\left[\alpha, \frac{\partial \tu}{\partial x}(t, x)\right]}(s) \, \p{{\tu}^*(t, x)}{x} \, \ppp{{\tu}(t, x)}{x} \, dx \, dt,
\quad s \in \I(\tw_0,T),
\end{equation}
where $\Xi_{[a_1, a_2]}(s)$ is the characteristic function of the
interval $[a_1, a_2] \in \I$ whereas the adjoint state ${\tu}^* \; : \; [0,2\pi] \times
[0,T] \rightarrow \RR$ is the solution of the following
system
\begin{subequations}
\label{eq:aKSLES}
\begin{align}
-\p{{\tu}^*}{t} + {\nu_4} \, \pppp{{\tu}^*}{x} + & \nu_2 \Big[\pp{{\tu}^*}{x} - {\tu}\p{{\tu}^*}{x}\Big] + \p{}{x}\Big[2 \, \frac{d\nu}{{ds}} \, \p{{\tu}}{x} \, \ppp{{\tu}}{x}\, \p{{\tu}^*}{x}\Big] + \ppp{}{x} \Big[\nu \, \p{{\tu}^*}{x} \Big] \nonumber \\
& = \sum_{i = 1}^{N} H_{i}^* \, [H_{i}{\tu}-m_i], \label{eq:aKSLESa} \\
\frac{\partial^{(i)} {\tu}^*}{\partial x^{(i)}}(t, 0) &= \frac{\partial^{(i)} {\tu}^*}{\partial x^{(i)}}(t, 2\pi), \quad \quad i = 0, \dotsc, 3, \label{eq:aKSLESb} \\
{\tu}^*(T, x) &= 0 \label{eq:aKSLESc}
\end{align}
\end{subequations}
in which $H_{i}^*$ are the adjoints of the observation operators
$H_i$, $i=1,\dots,N$. 
\label{thm:gradL2}
\end{theorem}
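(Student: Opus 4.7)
My plan for proving Theorem~\ref{thm:gradL2} combines a Lagrangian/adjoint manipulation of \eqref{eq:dJ} with a co-area-type change of variables that converts the resulting space--time integral into an $L^2(\LL)$ inner product with $\nu'$. The non-standard feature flagged in Section~\ref{sec:adjoint}---that the control $\nue$ depends on $|s|$ where $s=\partial_x\tu$ is itself a \emph{derivative} of the state $\tu$---forces two extra ingredients compared to a textbook adjoint calculation: a chain-rule contribution in the linearized equation, and a reindexing of the final integration variable from $(t,x)$ to $s$.

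First I would linearize \eqref{eq:KSLES} in the direction $\nu'$ to obtain the perturbation system \eqref{eq:dKSLES} governing $\tu'$. The eddy-viscosity term produces two contributions: a direct piece $\partial_x[\nu'(|s|)\,\partial_x^3\tu]$ from perturbing $\nue$ itself, and an indirect piece $\partial_x[(d\nue/ds)(|s|)\,\mathrm{sgn}(s)\,\partial_x\tu'\,\partial_x^3\tu]$ arising because perturbing $\tu$ perturbs $s$ as well. I would then pair this perturbation system with an undetermined $\tu^*$, integrate over $[0,T]\times\Omega$, and integrate by parts once in $t$ and up to four times in $x$. Periodicity \eqref{eq:KSLESb} kills every boundary term in $x$, while the initial condition $\tu'(0,\cdot)=0$ together with the choice $\tu^*(T,\cdot)=0$ in \eqref{eq:aKSLESc} kills both boundary terms in $t$. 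Demanding that the coefficient of $\tu'$ in the remaining bulk integral match the observation forcing $\sum_i H_i^*[H_i\tu - m_i]$ uniquely identifies the adjoint operator \eqref{eq:aKSLESa}; in particular, the divergence term $\partial_x\bigl[2(d\nue/ds)\,\partial_x\tu\,\partial_x^3\tu\,\partial_x\tu^*\bigr]$ is the formal adjoint of the indirect contribution above, obtained after writing $\mathrm{sgn}(s)\,\partial_x\tu' = \partial_x\tu'\cdot \partial_x\tu/|s|$ and transferring the factor $\partial_x\tu/|s|$ onto the adjoint side. After this cancellation, \eqref{eq:dJ} reduces to
\begin{equation}
\J'(\nue;\nu') = -\int_0^T\!\!\int_0^{2\pi} \nu'\bigl(|s(t,x)|\bigr)\,\p{\tu^*}{x}\,\ppp{\tu}{x}\,dx\,dt. \label{eq:planresid}
\end{equation}

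The main obstacle is the final step: recasting \eqref{eq:planresid} in the Riesz form $\int_a^b G(s)\,\nu'(s)\,ds$ required by \eqref{eq:Riesz}, because in \eqref{eq:planresid} the test function $\nu'$ is evaluated at the state-dependent point $|s(t,x)|$ rather than at a free argument $s\in\LL$. To eliminate this state-dependence I would use the layer-cake identity
\begin{equation}
\nu'\bigl(|s(t,x)|\bigr) - \nu'(\alpha) = \int_a^b \Xi_{[\alpha,\,|s(t,x)|]}(\sigma)\,\frac{d\nu'}{d\sigma}(\sigma)\,d\sigma, \label{eq:planFTC}
\end{equation}
substitute into \eqref{eq:planresid}, apply Fubini to bring the $\sigma$-integral outside the $(t,x)$-integrals, and then integrate by parts in $\sigma$ to transfer the derivative from $\nu'$ onto the remaining factor; this is what produces the outer $d/ds$ in \eqref{eq:gradL2}. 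The constant $\nu'(\alpha)$ term must be disposed of separately: one chooses the endpoint of \eqref{eq:planFTC} so that its boundary contribution vanishes (or, equivalently, works directly with the distributional identity $d/d\sigma\,\Xi_{[\alpha,|s|]}(\sigma)=\delta(\sigma-|s|)$), and one must check that the boundary contributions at $\sigma=a$ and $\sigma=b$ in the subsequent integration by parts also vanish, either by definition of $\I$ or because the characteristic function is supported strictly inside $\LL$. Comparing the resulting expression with \eqref{eq:Riesz} then reads off \eqref{eq:gradL2}. The regularity hypothesis \eqref{eq:nue_reg}, namely $\nue$ piecewise $C^3$ on $\LL$, is used precisely to justify \eqref{eq:planFTC} on each smooth piece and to guarantee that the $d/ds$ appearing in \eqref{eq:gradL2} makes sense in the intended piecewise classical (and globally weak) sense.
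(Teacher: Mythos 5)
Your plan reproduces the paper's own argument in all essential respects: linearize \eqref{eq:KSLES} keeping both the direct $\nu'$ term and the indirect chain-rule term coming from the perturbation of $s=\p{\tu}{x}$, pair the perturbation system \eqref{eq:dKSLES} with $\tu^*$, integrate by parts using periodicity and the conditions $\tu'(0,\cdot)=0$, $\tu^*(T,\cdot)=0$ to identify the adjoint operator \eqref{eq:aKSLESa}, and then convert the residual space--time integral into an $L^2(\LL)$ pairing with $\nu'$ by Fubini. The two differences are cosmetic: the paper changes variables to $\sigma=s^2$ precisely to avoid manipulating $\mathrm{sgn}(s)$ and the factor $\p{\tu}{x}/|s|$, which in your version degenerates at zero strain (a point that lies in $\LL$ since $a=0$), and it inserts the representation $\nu'\big(|\p{\tu}{x}|\big)=\int_a^b\delta\big(|\p{\tu}{x}|-s\big)\,\nu'(s)\,ds$ followed by $\delta\big(|\p{\tu}{x}|-s\big)=-\tfrac{d}{ds}\Xi_{[\alpha,|\partial\tu/\partial x|]}(s)$, which is exactly the distributional form of your layer-cake-plus-integration-by-parts step and automatically disposes of the $\nu'(\alpha)$ bookkeeping you discuss.

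There is, however, a sign inconsistency you must repair. With the adjoint system defined as in \eqref{eq:aKSLES} and the $\nu'$ term entering the perturbation equation through $\p{}{x}\big[\nu'\,\ppp{\tu}{x}\big]$, a single integration by parts in $x$ gives $\int_0^T\!\!\int_0^{2\pi}\p{}{x}\big[\nu'\,\ppp{\tu}{x}\big]\,\tu^*\,dx\,dt=-\int_0^T\!\!\int_0^{2\pi}\nu'\,\ppp{\tu}{x}\,\p{\tu^*}{x}\,dx\,dt$, so that after identifying the bulk term with \eqref{eq:dJ} one obtains $\J'(\nue;\nu')=+\int_0^T\!\!\int_0^{2\pi}\p{\tu^*}{x}\,\ppp{\tu}{x}\,\nu'\big(|s(t,x)|\big)\,dx\,dt$, with a \emph{plus} sign, not the minus sign in your reduced expression. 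The minus sign in \eqref{eq:gradL2} originates entirely from the final step (the $-\tfrac{d}{ds}$ produced when the derivative is transferred off $\nu'$ in your integration by parts in $\sigma$, equivalently from $\delta=-\tfrac{d}{ds}\Xi$). As your chain stands --- a minus in the reduced expression and a second minus generated by that transfer --- you would arrive at $+\tfrac{d}{ds}\int_0^T\!\!\int_0^{2\pi}\Xi\,\p{\tu^*}{x}\,\ppp{\tu}{x}\,dx\,dt$, i.e.\ the negative of the stated gradient, so the comparison with \eqref{eq:Riesz} would not read off \eqref{eq:gradL2}. Correct the sign of the reduced expression (or exhibit the compensating sign you silently dropped in the adjoint identification), and the rest of the argument goes through as you describe.
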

\begin{proof}
See Section \ref{sec:grad}.
\end{proof}}

We remark that the adjoint system \eqref{eq:aKSLES} is a {\em
  terminal-value} problem and must be therefore integrated backwards
in time whereas its coefficients are determined by the solution
${\tu}(t,x)$ of the (forward) LES problem \eqref{eq:KSLES} around
which linearization is performed (see Section \ref{sec:grad}). When
the adjoint system is properly defined, its solutions contain
information about the sensitivity of the solutions to the LES problem
\eqref{eq:KSLES}, and hence also the error functional \eqref{eq:J}, to
perturbations of the functional form of the eddy viscosity in
\eqref{eq:Mtw}.  From the structure of the last term on the left-hand
side (LHS) in \eqref{eq:aKSLESa} it is also clear that in order for
the adjoint system to be satisfied in the classical (strong) sense,
the eddy viscosity must possess the minimum regularity specified in
\eqref{eq:nue_reg}.

As defined in \eqref{eq:gradL2}, the $L^2$ gradient may not be used in
optimization algorithm \eqref{eq:desc} because it does not possess the
required regularity, cf.~\eqref{eq:nue_reg}, and is defined only on
the identifiability interval $\I$ which in general is smaller than the
domain {of definition} $\LL$ of the eddy viscosity (this latter
issue could in principle be remedied by extending the $L^2$ gradient
\eqref{eq:gradL2} onto $\LL \backslash \I$ with zeros). In order to
get around these difficulties we will derive the corresponding Sobolev
gradients \cite{Neuberger2010book,pbh04} by setting $\X = H^3(\LL)$ in
the Riesz identity \eqref{eq:Riesz} which, upon noting
\eqref{eq:ipH3}, leads to
\begin{align} 
\label{eq:RieszH3}
\mathcal{J}'(\nu; \nu') &= \left\langle \grad_{\nu}^{L^2}\mathcal{J}, \nu' \right\rangle_{L^2(\mathcal{L})} 
= \left\langle \grad_{\nu}^{H^3}\mathcal{J}, \nu' \right\rangle_{H^3(\mathcal{L})} \nonumber \\ 
&= \Big\langle \grad_{\nu}^{H^3}\mathcal{J}, \nu' \Big\rangle_{L^2(\mathcal{L})} 
+ \ell_1^2 \, \Big\langle \frac{d(\grad_{\nu}^{H^3}\mathcal{J})}{ds}, \frac{d\nu'}{ds} \Big\rangle_{L^2(\mathcal{L})}
+ \ell_2^4 \, \Big\langle \frac{d^2(\grad_{\nu}^{H^3}\mathcal{J})}{ds^2}, \frac{d^2\nu'}{ds^2} \Big\rangle_{L^2(\mathcal{L})} \nonumber \\
& \quad\quad+ \ell_3^6 \, \Big\langle \frac{d^3(\grad_{\nu}^{H^3}\mathcal{J})}{ds^3}, \frac{d^3\nu'}{ds^3} \Big\rangle_{L^2(\mathcal{L})}.
\end{align}
Sobolev gradients are determined subject to certain boundary
conditions imposed on $s = a$ and $s = b$
\cite{bukshtynov2013optimal,bukshtynov2011optimal} which in turn
determine the behavior of the corresponding properties of the optimal
eddy viscosity $\nud(s)$ at $s = a,b$. There is some freedom as
regards this choice and we shall use 
\begin{equation} 
\label{eq:gradH3bc}
\frac{d^{(2i+1)} \, (\grad_{\nu}^{H^3}\J)}{ds^{(2i+1)}} \Big|_{s=a} =
\frac{d^{(i)} \, (\grad_{\nu}^{H^3}\J)}{ds^{(i)}} \Big|_{s=b} = 0, \quad \quad i = 0, \dotsc, 2,
\end{equation}
which implies that in the gradient iterations \eqref{eq:desc} the
odd-degree derivatives of the eddy viscosity with respect to $s$ will
remain unchanged with respect to the initial guess $\nu_0$ at $s = a$
and the value of ${\nu}$ and its first two derivatives will remain
unchanged at $s = b$. We remark that, in particular, the gradient
iterations \eqref{eq:desc} are allowed to modify the value of $\nud$
at $s = a$. It should be noted, however, that when the domain $\LL$ is
much larger than the identifiability interval $\I$ , i.e., $a \ll
\alpha$ and $\beta \ll b$, the boundary conditions such as
\eqref{eq:gradH3bc} have little effect on the optimal eddy viscosity
$\nud$. Performing integration by parts with respect to $s$ in
\eqref{eq:RieszH3} required number of times and noting that due to the
judicious choice of the boundary conditions \eqref{eq:gradH3bc} all
boundary terms vanish we obtain
\begin{align*} 
\mathcal{J}'(\nu; \nu') &=  \int_{a}^b \grad_{\nu}^{L^2}\mathcal{J} \, \nu' \, ds  
= \int_{a}^b \Big[\grad_{\nu}^{H^3}\mathcal{J} 
- \ell_1^2 \, \frac{d^2(\grad_{\nu}^{H^3}\mathcal{J})}{ds^2} 
+ \ell_2^4 \, \frac{d^4(\grad_{\nu}^{H^3}\mathcal{J})}{ds^4} 
- \ell_3^6 \, \frac{d^6(\grad_{\nu}^{H^3}\mathcal{J})}{ds^6} \Big] \, \nu' \, ds
\end{align*}
which because of the arbitrariness of the perturbation $\nu' \in
H^3(\LL)$ implies
\begin{equation} 
\label{eq:gradH3}
\left[\Id 
- \ell_1^2 \, \frac{d^2}{ds^2} 
+ \ell_2^4 \, \frac{d^4}{ds^4} 
- \ell_3^6 \, \frac{d^6}{ds^6}\right] \, \grad_{\nu}^{H^3}\J(s) = \grad_{\nu}^{L^2}\mathcal{J}(s),
\qquad s \in \LL.
\end{equation}
In \eqref{eq:gradH3} the $L^2$ gradient appearing on the RHS is
extended from the identifiability interval $\I$ to the entire domain
$\LL$ with zeros. We note that extraction of Sobolev gradients by
solving the boundary-value problem
\eqref{eq:gradH3bc}--\eqref{eq:gradH3} can be interpreted as applying
a low-pass filter to the $L^2$ gradient.  Indeed, the Fourier
transform of \eqref{eq:gradH3} yields
${\left[\what{\grad_{\nu}^{H^3}\J}\right]_k = \F(k) \,
  \left[\what{\grad_{\nu}^{L^2}\mathcal{J}}\right]_k}$, where $\F(k)
:= (1 + \ell_1^2 \, k^2 + \ell_2^4 \, k^4 + \ell_3^6 \, k^6)^{-1}$,
which shows that the cut-off for filtering is determined {by} the
length scales $\ell_1$, $\ell_2$ and $\ell_3$. %, cf.~Figure \ref{fig:Fk}. 
Adjusting these parameters will have a significant
effect on the rate of convergence of gradient iterations
\eqref{eq:desc} \cite{pbh04}.
%\begin{figure}
%    \centering
%    \includegraphics[width=0.44\textwidth]{Figs2/Sobolev_param}
%    \caption{Schematic representation of the function $\F(k)$ defined
%      by system \eqref{eq:gradH3}. Both axes have logarithmic scaling
%      and it is assumed that {$\ell_1=0$ and} $\ell_2 > \ell_3$.}
%    \label{fig:Fk}
%\end{figure}

%%%%%%%%%%%%%%%%%%%%%%%%%%%%%%%%%%%%%%%%%%%%%%%%%%%%%%%%%%%%%%%%%%%%%%%%%%%%
% the third chapter - Problem Parameters
\section{\label{sec:Problem}Problem Set-up} % Main section title

The formulation in Section \ref{sec:optimal} was introduced in a quite
general setting and in this section we first provide concrete
definitions of the observation operators $H_i$, $i=1,\dots,N$, (and
their adjoints $H_i^*$), and then discuss the choice of the various
parameters defining the problem of finding the optimal eddy viscosity
$\nud$, cf.~\eqref{eq:minJ}.

\subsection{Observation Operators}
\label{sec:H}

We start by defining the observation operators and consider two
choices corresponding to observations in the physical and in the
spectral (Fourier) space.

\subsubsection{Physical-Space Observations}
\label{sec:Hph}

We will assume here that for all times $t \in [0,T]$ the solution
${\tu}(t,x)$ of the LES system \eqref{eq:KSLES} is observed at a certain
number of observation points $\{x_i\}_{i=1}^N$ uniformly distributed
over the spatial domain $[0,2\pi]$.  The observation operator $H_i \;
: \; {H^1}(0,2\pi) \rightarrow \RR$ associated with the $i$th point is
then given by (we express it here in an integral form in order to
facilitate obtaining its adjoint $H_i^*$)
\begin{equation*}
H_{i} {\tu}(t,\cdot) := \int_{0}^{2\pi} \delta (x_i - x) \, {\tu}(t,x) \, dx, \qquad i = 1, \dots, N. 
\label{eq:Hph}
\end{equation*}
The adjoint system \eqref{eq:aKSLES} also involves the adjoints $H_i^*
\; : \; \RR \rightarrow {H^{-1}}(0,2\pi)$ of $H_i$,
$i=1,\dots,N$, which can be obtained from the following duality
relations
\begin{align*}
\left\langle f, H_{i} {\tu}(t,\cdot) \right\rangle_{\mathbb{R}} 
&= f \, \int_{0}^{2\pi} \delta (x_i - x) \, {\tu}(t,x) \, dx 
= \int_{0}^{2\pi} \left[ f \, \delta (x_i - x) \right] \, {\tu}(t,x) \, dx \\
&= \left\langle H_{i}^* f, {\tu}(t,\cdot) \right\rangle_{{H^{-1}(\Omega)\times H^1(\Omega)}},
\end{align*}
where $f \in \RR$ and $\langle \cdot, \cdot \rangle_{\RR}$ denotes the
(scalar) product of two real numbers. From this we thus deduce
\begin{equation*}
\forall f \in \RR \qquad H_{i}^*  f := f \, \delta (x_i - x), \qquad i = 1, \dots, N. 
\label{eq:aHph}
\end{equation*}

\subsubsection{Fourier-Space Observations}
\label{sec:Hs}

Since the periodic Kuramoto-Sivashinsky system \eqref{KS} is
employed here as a ``toy model'' for homogeneous turbulence, another
natural way to define the observation operators $H_i$ is in terms of
the Fourier (e.g., cosine) transform of the state and this is the
second choice we shall consider 
\begin{equation}
H_{i} {\tu}(t,\cdot) := \int_{0}^{2\pi} \cos(k_i x) \, {\tu}(t,x) \, dx, 
\qquad k_i \in \KK, \quad  i = 1, \dots, N,
\label{eq:Hs}
\end{equation}
where $\KK$ is the set of wavenumbers corresponding to the observed
Fourier components (with cardinality $|\KK| = N$). The adjoints
$H_i^*$ of these observation operators are then obtained by
considering the duality relations
\begin{equation*}
\left\langle f, H_{i} {\tu}(t,\cdot) \right\rangle_{\mathbb{R}} 
= f \, \int_{0}^{2\pi} \cos(k_i x) \, {\tu}(t,x) \, dx 
= \int_{0}^{2\pi} \left[ f \,\cos(k_i x) \right] \, {\tu}(t,x) \, dx
= \left\langle H_{i}^* f, {\tu}(t,\cdot) \right\rangle_{L^2(\Omega)},
\end{equation*}
from which we deduce
\begin{equation*}
\forall f \in \RR \qquad H_{i}^*  f := f \, \cos(k_i x), \qquad i = 1, \dots, N. 
\label{eq:aHs}
\end{equation*}

\subsection{Physical Parameters}
\label{sec:params}

The long-time behavior of the solutions $w$ of the
Kuramoto-Sivashinsky system \eqref{KS} is determined by the parameters
${\nu_4}$ and $\nu_2$ \cite{MR2920624}. In our study we shall use the
values ${\nu_4} = 1$ and $\nu_2 = 100$ chosen such that, after an
initial transient, the solution $w$ will on average feature 7 waves
(``coherent structures'') present in the domain during the evolution
(this number coincides with the wavenumber $k_0$ of the most
(linearly) unstable mode of the Kuramoto-Sivashinsky system \eqref{KS}
linearized about the zero state $w(t,x) = 0$ \cite{MR2920624}). Given
the chaotic nature of the Kuramoto-Sivashinsky system in this
parameter regime and our interest in the long-time evolution, as the
initial condition $w_0$ we will take a certain state on the turbulent
attractor. When defining the LES system \eqref{eq:KSLES} we will take
the cut-off {wavenumber} in filter \eqref{eq:hG} to be
$k_{\text{max}} = 16$, which falls in the ``inertial range'' not too
far from the wavenumber $k_0 = 7$ characterizing the most unstable
modes (which can be interpreted as ``forcing''), cf.~Figure
\ref{fig:Hi}. In the solution of the optimization problem
\eqref{eq:J}--\eqref{eq:minJ} we will use $N = 8$ observations and in
the case when the observation operators are defined in the Fourier
space, cf.~Section \ref{sec:Hs}, we will consider two sets $\KK$ of
observed wavenumbers
\begin{subequations}
\label{eq:KK}
\begin{align}
& \bullet \ \text{equispaced:} \quad \KK = \{1,3,\dots,15\},   \label{eq:KKa} \\
& \bullet \ \text{clustered around $k_0$:} \quad \KK = \{4,5,\dots,11\}.   \label{eq:KKb} 
\end{align}
\end{subequations}
The number of observations $N$ is chosen to be smaller than the number
$k_{\text{max}}$ of the Fourier components resolved in the LES
\eqref{eq:KSLES}, cf.~Figure \ref{fig:Hi}. As regards the initial
guess $\nu_0$ for the eddy viscosity in the gradient descent
\eqref{eq:desc} we will take the Smagorinsky model \eqref{eq:nusKS}
with $C_s = 0.002$. The domain $\LL = [a,b]$ will have boundaries
$a=0$ and $b=400$ which for the given problem set-up ensures that $b >
\sup_{x\in\Omega, \ t \in [0,T]} |s|$.
%Spectrum
\begin{figure}
  \centering
\mbox{
  \subfigure[]
  {
    % Spectrum
    \includegraphics[scale=0.44]{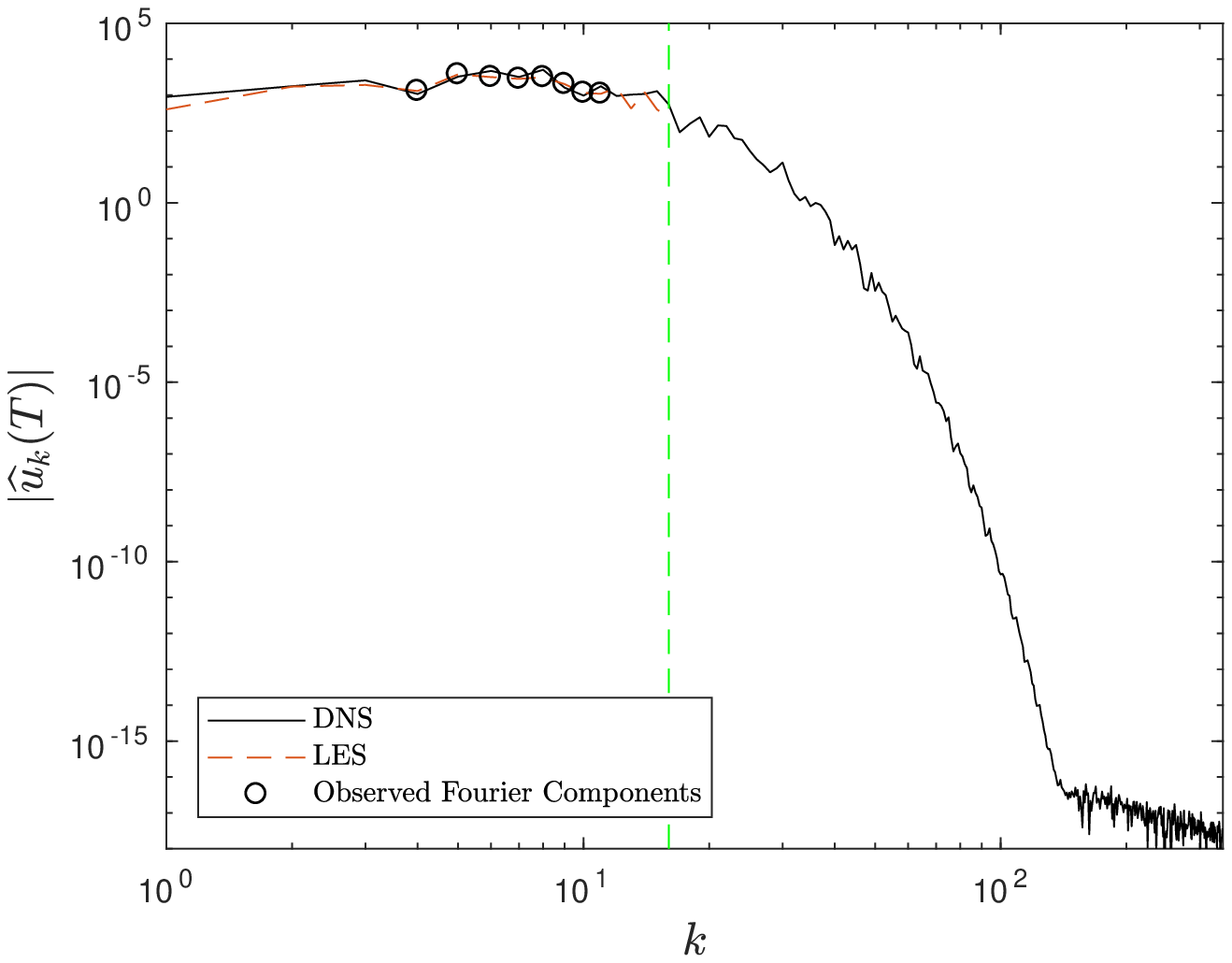}
 %   \label{fig:Spec}
  }\qquad\qquad
  \subfigure[]
  {
    % Zoomed in
    \includegraphics[scale=0.44]{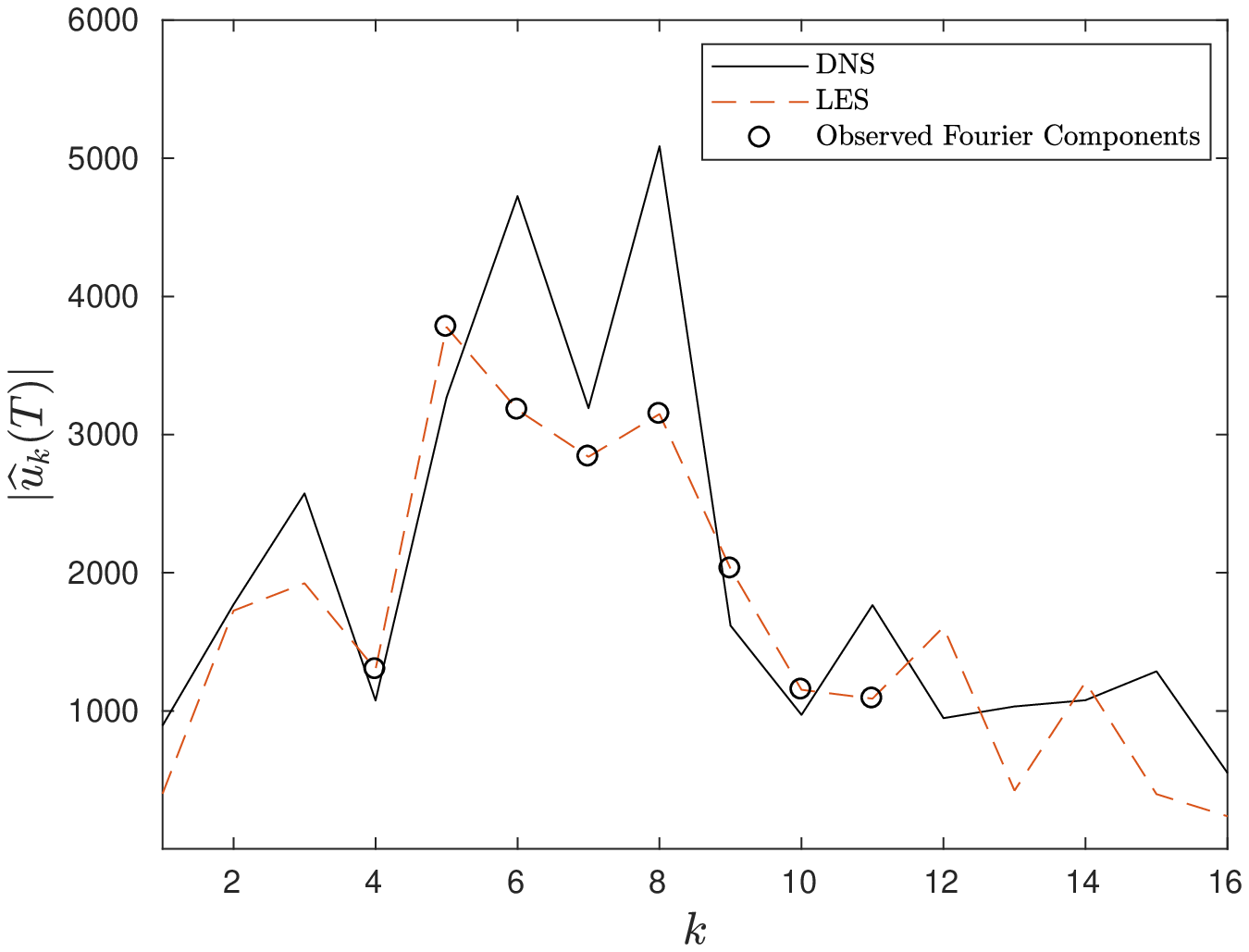}
%    \label{fig:Spec_zoom}
  }
}
\caption{(a) The spectra of representative solutions of the DNS
  problem (black solid lines) and the LES problem (dashed red lines)
  together with the Fourier components whose real parts are recorded
  by the observation operators $H_i$, $i=1,\dots,8$, defined in
  Section \ref{sec:Hs}, cf.~\eqref{eq:Hs} and \eqref{eq:KKb} (black
  circles).  (b) Magnification of the wavenumber region around $k_0$.
  The green vertical line in panel (a) represents {the cut-off
    wavenumber} $k_{\text{max}}$.  The DNS and LES problems \eqref{KS}
  and \eqref{eq:KSLES} are solved numerically as described in Section
  \ref{sec:disc}.}
\label{fig:Hi}
\end{figure}

%%%%%%%%%%%%%%%%%%%%%%%%%%%%%%%%%%%%%%%%%%%%%%%%%%%%%%%%%%%%%%%%%%%%%%%%%%%%
% the fourth chapter - Numerical
\section{\label{sec:Numerical}Numerical Approach} % Main section title

The gradient-descent approach \eqref{eq:desc} is formulated in the
continuous (``optimize-then-discretize'') setting \cite{g03} and
evaluation of the gradient expression \eqref{eq:gradL2} requires
solutions of the LES and the adjoint systems \eqref{eq:KSLES} and
\eqref{eq:aKSLES}. In this section we first discuss the numerical
solution of these PDE problems and the computation of the Sobolev
gradients via {\eqref{eq:gradH3bc}--}\eqref{eq:gradH3}. Then we
describe the implementation of the gradient-descent algorithm
\eqref{eq:desc}.

\subsection{Discretization}
\label{sec:disc}

The LES and adjoint systems \eqref{eq:KSLES} and \eqref{eq:aKSLES}
involve model terms with state-dependent eddy viscosity $\nu(|{s}|)$
and in order to represent this expression, in addition to discretizing
the space and time domains $[0,2\pi]$ and $[0,T]$, we also need to
discretize the state domain $\LL$. The former two domains are
discretized using grids with equispaced points and steps sizes $\Delta
x = 2 \pi / N_x$ and $\Delta t$, where $N_x$ is the number of grid
points in space, whereas the state domain $\LL$ is discretized with
$N_s$ Chebyshev points. The original Kuramoto-Sivashinsky system
\eqref{KS}, its LES version \eqref{eq:KSLES}, and the adjoint system
\eqref{eq:aKSLES} are solved using the standard Fourier
pseudo-spectral method \cite{Canuto1993book} where dealiasing based on
the 3/2 rule is performed in the case of the Kuramoto-Sivashinsky
system \eqref{KS}, but due to aggressive filtering,
{cf.~\eqref{eq:hG},} is unnecessary in the latter two problems.
Evaluation of the model terms in \eqref{eq:KSLES} and
\eqref{eq:aKSLES} requires differentiation of the eddy viscosity
$\nu(|{s}|)$ with respect to ${s}$ which is performed using
spectrally-accurate Chebyshev differentiation matrices
\cite{trefethen2000spectral} defined in the state domain $\LL$. The
eddy viscosity $\nu(|{s}|)$ and its derivative $d \nu(|{s}|) / d{s}$
are then interpolated from the state space to the physical space using
the barycentric formulas which are also spectrally accurate
\cite{trefethen2013approximation}. {This step ensures that the
  regularity required of the eddy viscosity, cf.~\eqref{eq:nue_reg},
  is maintained.} The time-discretization of systems \eqref{KS},
\eqref{eq:KSLES}, and \eqref{eq:aKSLES} is performed using the
exponential time-differencing fourth-order Runge-Kutta method (ETDRK4)
\cite{kassam2005fourth}, originally introduced in
\cite{cox2002exponential}, which is fourth-order accurate. The
different integrals are approximated using Gaussian quadratures, which
are given by the trapezoidal rule on the periodic domain $\Omega$
(e.g., in \eqref{eq:J}), and by the Clenshaw-Curtis formula on the
bounded domain $\LL$ (e.g., in \eqref{eq:ip}). The boundary-value
problem {\eqref{eq:gradH3bc}--}\eqref{eq:gradH3} defining the Sobolev
gradients is solved using the {\tt chebop} feature of Chebfun
\cite{driscoll2014chebfun}, where the discretization is performed
based on ultraspherical polynomials. With most computations carried
out with spectral accuracy, approximation errors are dominated by
time-stepping errors where the accuracy is $\mathcal{O}((\Delta
t)^4)$. Unless mentioned otherwise, in our computations we use $N_x =
{1024}$, $N_s = {4096}$ and $\Delta t = {3.0 \times 10^{-6}}$.

In order to validate the discretization techniques discussed
  above, we verify the accuracy of the cost functional gradients
  evaluated as described in Section \ref{sec:adjoint},
  cf.~\eqref{eq:gradL2}, by computing the G\^ateaux differential
  $\J'(\nu;\nu')$ in terms of the Riesz identity \eqref{eq:Riesz} and
  comparing it with its approximation obtained with a simple forward
  finite-difference formula. The ratio of these two expressions is
  thus given by
\begin{equation}
\kappa(\epsilon) := \frac{\epsilon^{-1} \left[ \J(\nu + \epsilon \nu') - \J(\nu) \right]}
{\left\langle \grad_{\nue}^{L^2}\mathcal{J}, \nu' \right\rangle_{L^2(\LL)}},
\label{eq:kappa}
\end{equation}
where $\nu'$ is an arbitrary perturbation and $\epsilon > 0$ its
magnitude. We expect $\kappa(\epsilon)$ to be close to unity and this
is indeed evident in Figure \ref{fig:kappa} for a range of $\epsilon$
spanning several orders of magnitude. The large deviations of
$\kappa(\epsilon)$ from unity observed for very small and very large
values of $\epsilon$ are due to, respectively, the round-off and
truncation errors in the finite-difference formula, both of which are
well-known effects \cite{bukshtynov2013optimal,bukshtynov2011optimal}.
In Figure \ref{fig:Kap_log} we also note that, as expected, for intermediate
values of $\epsilon$, $\kappa(\epsilon) \rightarrow 1$ as the
discretization parameter $\Delta t$ used in the numerical solution of
the PDE systems \eqref{eq:KSLES} and \eqref{eq:aKSLES} is refined.
These results demonstrate the {consistency} of the cost
functional gradients evaluated as discussed in Section
\ref{sec:adjoint} {and also show that when sufficient numerical
  resolution is used, discretization errors will have a vanishing
  effect on the accuracy of the gradients and therefore also on the
  accuracy of the obtained optimal forms of the eddy viscosity. Thus,
  with the values of the discretization parameters $N_x$, $N_s$ and
  $\Delta t$ indicated above, our computations are fully resolved such
  that further refinements of these parameters would not produce
  appreciable changes of the results.}
\begin{figure}
  \centering
\mbox{
  \subfigure[]
  {
    % Kappa Test
    \includegraphics[scale=0.44]{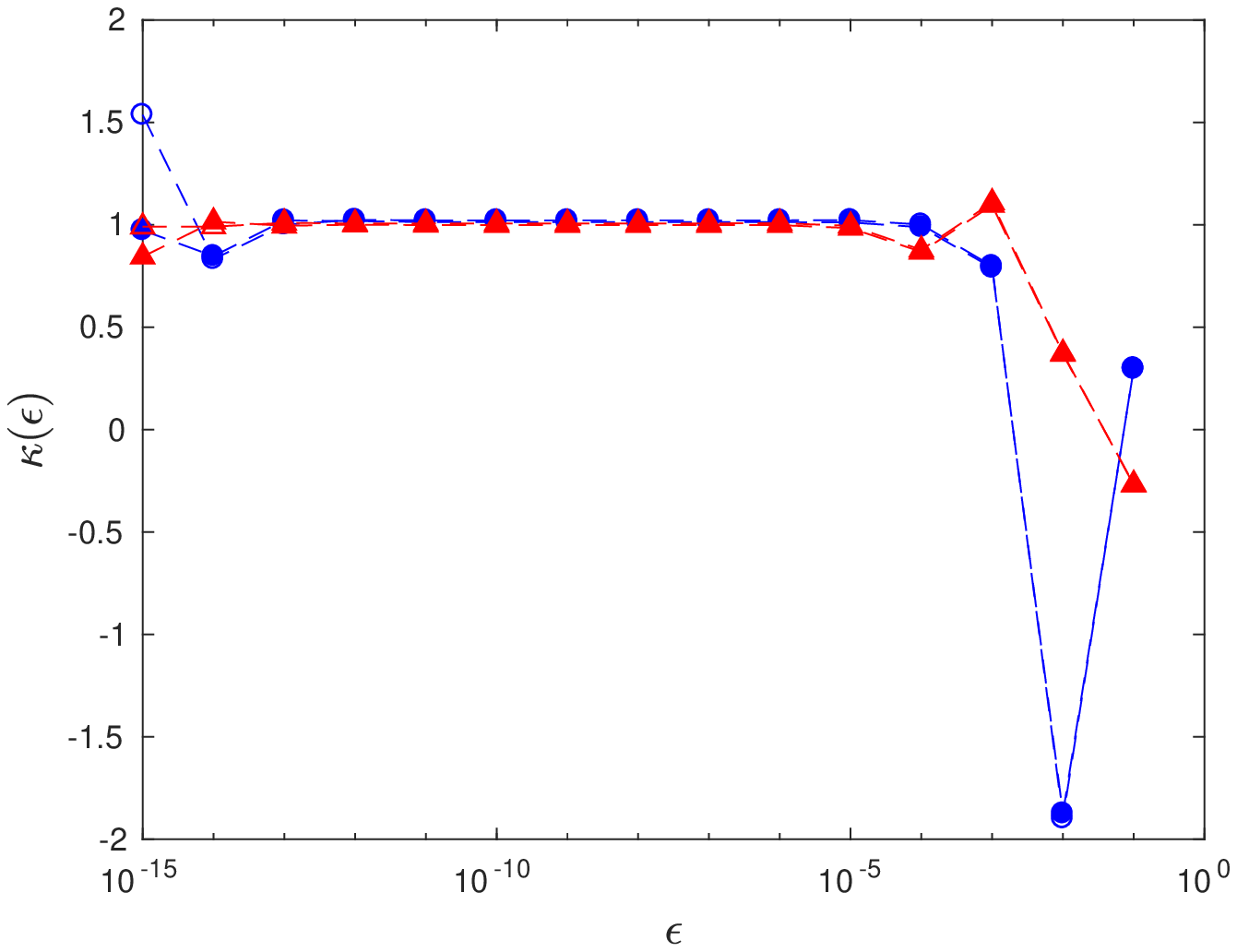}
    \label{fig:Kap}
  }\qquad\qquad
  \subfigure[]
  {
    % Log-Kappa Test
    \includegraphics[scale=0.44]{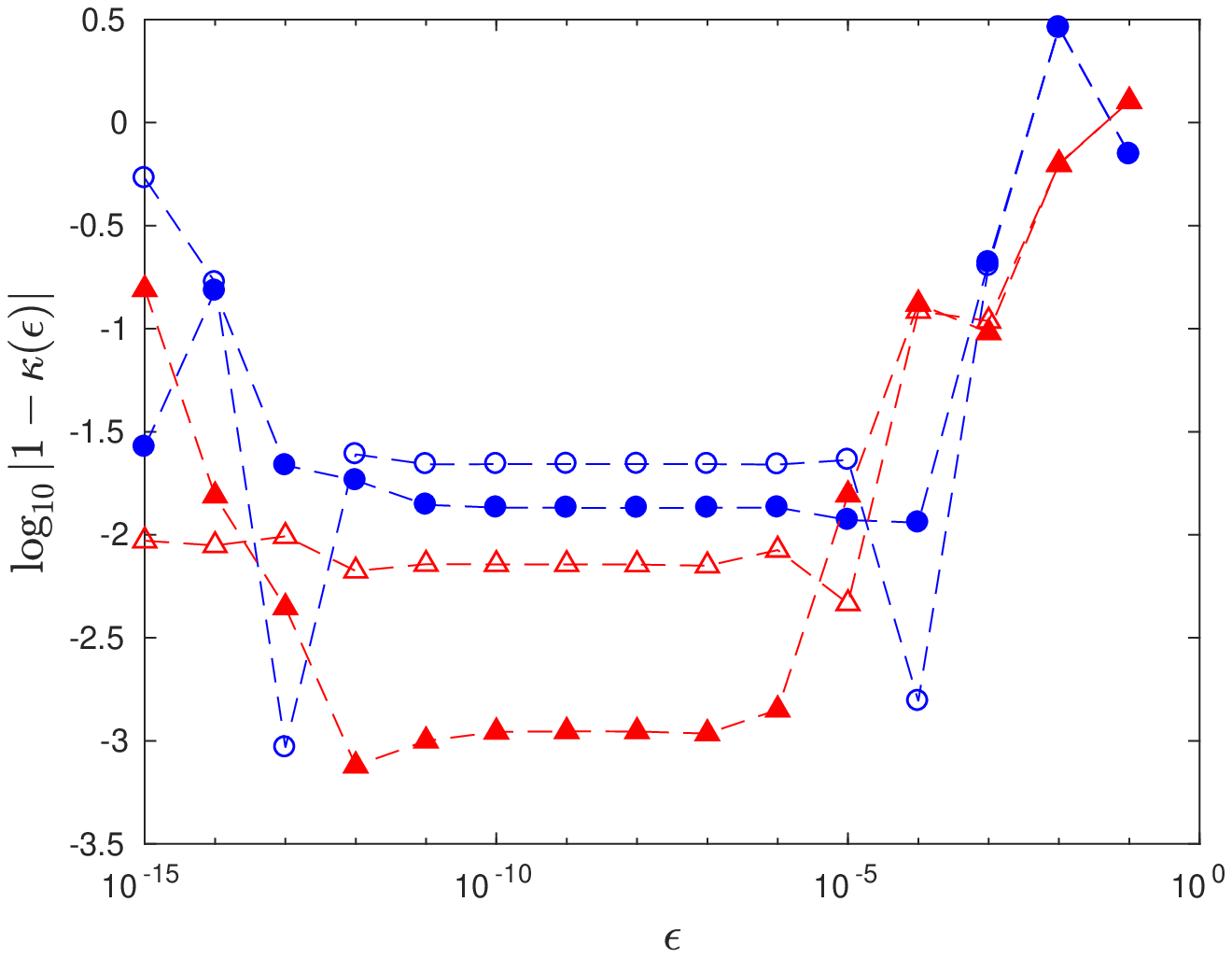}
    \label{fig:Kap_log}
  }
}
\caption{Dependence of \subref{fig:Kap} $\kappa$ and
    \subref{fig:Kap_log} $\log_{10}|1 - \kappa|$ on $\epsilon$,
    cf.~\eqref{eq:kappa}, for two different perturbations $\nu'$ (blue
    circles vs.~red triangles) and two different time steps (empty
    symbols) $\Delta t = 3.0 \times 10^{-6}$ and (filled symbols)
    $\Delta t = 1.0 \times 10^{-6}$ used in the solution of the PDE
    problems \eqref{eq:KSLES} and \eqref{eq:aKSLES}.}
\label{fig:kappa}
\end{figure}

\subsection{Gradient Descent}
\label{sec:desc}

While for simplicity of presentation in Section \ref{sec:optimal} the
steepest-descent (simple gradient) approach was used in
\eqref{eq:desc}, in actual computations we use the conjugate-gradients
method \cite{nw00} which is known to significantly accelerate
convergence. The gradient $\grad_{\nue}\J(\nu^{(n)})$ in
\eqref{eq:desc} is then replaced with the descent direction $g^{(n)}$
defined as
\begin{equation}
\begin{aligned}
g^{(n)} &= \grad_{\nue}\J(\nu^{(n)}) - \gamma_n \, g^{(n-1)}, \quad n=1,2,\dots, \\
g^{(0)} &= \grad_{\nue}\J(\nu^{(0)}),
\end{aligned}
\label{eq:CG}
\end{equation}
where the ``momentum'' term $\gamma_n$ is evaluated using the
Polak-Ribi{\`{e}}re formula
\begin{equation} 
\label{eq:CGPR}
\gamma_n = \frac{\Big\langle \left(\grad_{\nue}\J(\nu^{(n)})  - \grad_{\nue}\J(\nu^{(n-1)}) \right), \grad_{\nue}\J(\nu^{(n)}) \Big\rangle_{H^3(\LL)}}
{\Big\langle \grad_{\nue}\J(\nu^{(n-1)}), \grad_{\nue}\J(\nu^{(n-1)}) \Big\rangle_{H^3(\LL)} }.
\end{equation}
It is a good practice \cite{nw00} for the conjugate-gradients approach
\eqref{eq:CG} to be periodically restarted {with a gradient step}
after a certain number of iterations. The line-minimization problem
\eqref{eq:taun} is {efficiently} solved using Brent's algorithm
\cite{pftv86}, which is a standard approach. Gradient iterations
\eqref{eq:desc} are declared converged when the following termination
condition is satisfied
\begin{equation} 
\label{eq:iterend}
\frac{|\J(\nu^{(n+1)}) - \J(\nu^{(n)})|}{\J(\nu^{(n)})} < \epsilon_{\J},
\end{equation}
where $\epsilon_{\J}$ is a prescribed tolerance (we will use
$\epsilon_{\J} = 10^{-7}$). The choice of the length-scale parameters
$\ell_1$, $\ell_2$ and $\ell_3$ defining the Sobolev inner product
\eqref{eq:ipH3} will be discussed in the next section. The different
steps in the solution of the optimization problem \eqref{eq:minJ} are
summarized as Algorithm \ref{algo1}.

%\begin{minipage}[t]{\textwidth}
%\null
%\begin{algorithm}[H] 
\begin{algorithm}[t] 
  \caption{Implementation of the conjugate-gradients variant of
    descent approach \eqref{eq:desc}.
  \newline \textbf{Input:} \\
  \hspace*{0.3cm} $\{m(t)\}_{i=1}^N$ --- target observations (e.g., of the DNS, cf.~\eqref{KS}; $N$ is number of observations)  \\
  \hspace*{0.3cm} $N_x,N_s,\Delta t$ --- numerical discretization parameters \\
  \hspace*{0.3cm} $\ell_1,\ell_2,\ell_3$ --- Sobolev length scales, cf.~\eqref{eq:ipH3} \\
  \hspace*{0.3cm} $\epsilon_{\J}$ --- tolerance in the termination criterion \eqref{eq:iterend} \\
  \hspace*{0.3cm} $\nu_0$ --- initial guess for the eddy viscosity (e.g., \eqref{eq:nusKS}) \\
  \textbf{Output:} \\
  \hspace*{0.3cm} $\nud$ --- optimal eddy viscosity \\ 
}
\label{algo1}
\begin{algorithmic}
\State \textbullet~set $n = 0$
\State \textbullet~set $\nu^{(0)} = \nu_0$
\Repeat 
\State \textbullet~set $n = n+1$
\State \textbullet~solve the LES problem \eqref{eq:KSLES}
\State \textbullet~solve the adjoint problem \eqref{eq:aKSLES}
\State \textbullet~determine the {$L^2$} gradient $\grad_{\nu}^{L^2}\J(\nu^{(n)})$, cf.~\eqref{eq:gradL2}
\State \textbullet~determine the Sobolev gradient $\grad_{\nu}^{H^3}\J(\nu^{(n)})$ via {\eqref{eq:gradH3bc}--}\eqref{eq:gradH3}
\State \textbullet~determine the ``momentum'' coefficient $\gamma_n$ via \eqref{eq:CGPR}
\State \textbullet~determine the conjugate descent direction $g^{(n)}$ via \eqref{eq:CG}
\State \textbullet~determine the optimal step length $\tau^{(n)}$ by solving \eqref{eq:taun} {with Brent's algorithm}
\State \textbullet~update the eddy viscosity $\nu^{(n)} \ \leftarrow \nu^{(n)} - \tau_n \, g^{(n)}$
\Until termination criterion \eqref{eq:iterend} is satisfied
\end{algorithmic}
\end{algorithm}
%\end{minipage}

%%%%%%%%%%%%%%%%%%%%%%%%%%%%%%%%%%%%%%%%%%%%%%%%%%%%%%%%%%%%%%%%%%%%%%%%%%%%
% the fifth chapter - Results
\section{\label{sec:Results}Results} % Main section title

In this section we present and analyze the results obtained with our
approach to determining optimal {eddy-viscosity} closure models,
cf.~Algorithm \ref{algo1}, and compare them to the results obtained
with other closure models including the approach of Das \& Moser
\cite{das2002optimal}, where the closure model also has some
optimality properties.  We consider observation operators $H_i$,
$i=1,\dots,N$, defined both in the physical space, cf.~Section
\ref{sec:Hph}, and in the Fourier space, cf.~Section \ref{sec:Hs}, the
latter with different distributions of the observed wavenumbers
\eqref{eq:KK}.  With most other problem parameters fixed as discussed
in Section \ref{sec:params}, there remains one key characteristic
defining the optimization problem \eqref{eq:J}--\eqref{eq:minJ},
namely, the length of the time window and different values of $T$ are
considered, covering from a few to several typical events in the
evolution of the Kuramoto-Sivashinsky system \eqref{KS} (these
``events'' are the merging of crests or formation on new ones).
Summary information about these computations is compiled in Table
\ref{tab:summary}, where we also indicate the values of the
length-scale parameters $\ell_2$ and $\ell_3$ ($\ell_1 = 0$) used to
determine the Sobolev gradients, cf.~\eqref{eq:ipH3}.  The values of
these parameters are chosen by trial and error to maximize the
convergence of iterations in Algorithm \ref{algo1}.  We conclude from
the data compiled in Table \ref{tab:summary} that in all cases
optimization reduces the observation error \eqref{eq:J} by a factor
$\mathcal{O}(1)$ with its specific value depending on the length of
the optimization window $T$ and there tends to be an optimal value of
$T$ for which the largest reduction of the error functional
\eqref{eq:J} is achieved. To fix attention, we will henceforth focus
on two representative configurations, namely, one with observations in
the physical space and one with observations in the Fourier space,
denoted respectively ``Case A'' and ``Case B'' in Table
\ref{tab:summary}.

\begin{table}[t!]
\centering
\resizebox{\textwidth}{!}{\begin{tabular}{ |c|c|c|c|c|c|c|c|c|c| } 
 \hline
% \rowcolor{Gray} 
{Observations}
 & \multicolumn{3}{|c|}{Physical Space} & \multicolumn{3}{|c|}{Fourier Space --- Equispaced} & \multicolumn{3}{|c|}{Fourier Space --- Clustered} \\
 \hline
 \text{T} & $1.5 \times 10^{-3}$ & $3.0 \times 10^{-3}$ & $9.0 \times 10^{-3}$ & $1.5 \times 10^{-3}$ & $3.0 \times 10^{-3}$ & $9.0 \times 10^{-3}$ & $1.5 \times 10^{-3}$ & $3.0 \times 10^{-3}$ & $9.0 \times 10^{-3}$ \\ 
 \hline
 ${\ell_{2}}$ & {$10^3$} & {$10^3$} & {$10^6$} & {$10^3$} & {$10^3$} & {$10^5$} & {$10^3$} & {$10^3$} & {$10^3$} \\ 
 \hline
 ${\ell_{3}}$ & {$10^1$} & {$10^1$} & {$10^5$} & {$10^1$} & {$10^1$} & {$10^5$} & {$10^1$} & {$10^1$} & {$10^1$} \\
 \hline
 ${ \frac{\J(\nu_0)}{\J(\nud)} }$ & {$1.51$} & {$8.21$} & {$1.80$} & {$3.51$} & {$2.75$} & {$1.49$} & {$5.02$} & {$6.76$} & {$2.22$} \\
 \hline
 Note &  & Case A &  &  &  &  &  & Case B &  \\
 \hline
\end{tabular}}
\caption{Summary information about the different cases considered in our computations.}
\label{tab:summary}
\end{table} 

The decrease of the normalized objective functional \eqref{eq:J} with
iterations $n$ in Cases A and B is shown in Figure \ref{fig:J}. In
this figure we observe a reduction of the observation error by close
to one order of magnitude over $\mathcal{O}(10)$ iterations. We note
however that the convergence rate of iterations \eqref{eq:desc} is
rather nonuniform. The corresponding optimal eddy viscosities
$\nud(s)$ are shown as functions of the resolved strain $s$ together
with the Smagorinsky model \eqref{eq:nusKS} used as the initial guess
in \eqref{eq:desc} in Figure \ref{fig:nud}. We see that while the
optimal eddy viscosities $\nud(s)$ are defined on a larger domain
$\LL$, the deviations from the initial guess $\nu_0$ produced by the
gradient iterations \eqref{eq:desc} are essentially confined to a
smaller identifiability interval $\I$. Most importantly, in contrast
to the original Smagorinsky model \eqref{eq:nusKS}, the optimal eddy
viscosities are negative for small strains such that $\nud(0) < 0$. It
is encouraging to note that the optimal eddy viscosities obtained in
Cases A and B exhibit a qualitatively similar dependence on $s$,
despite rather different forms of observations used to define the
optimization problem \eqref{eq:J}--\eqref{eq:minJ} in these two cases.
\begin{figure}
  \centering
\mbox{
  \subfigure[]
  {
    % Cost function
    \includegraphics[scale=0.44]{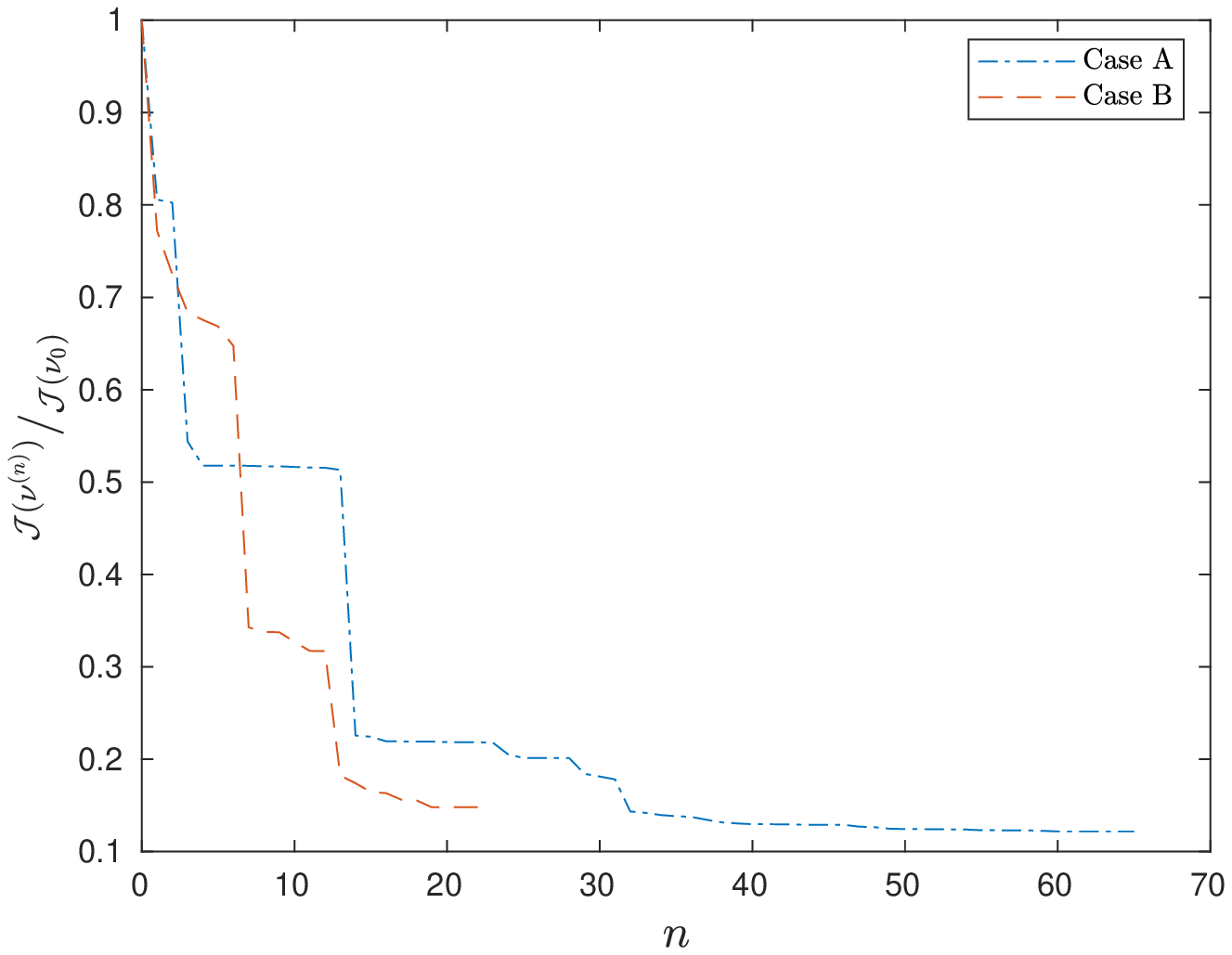}
    \label{fig:J}
  } \qquad\qquad
  \subfigure[]
  {
    % Eddy Viscosity
    \includegraphics[scale=0.44]{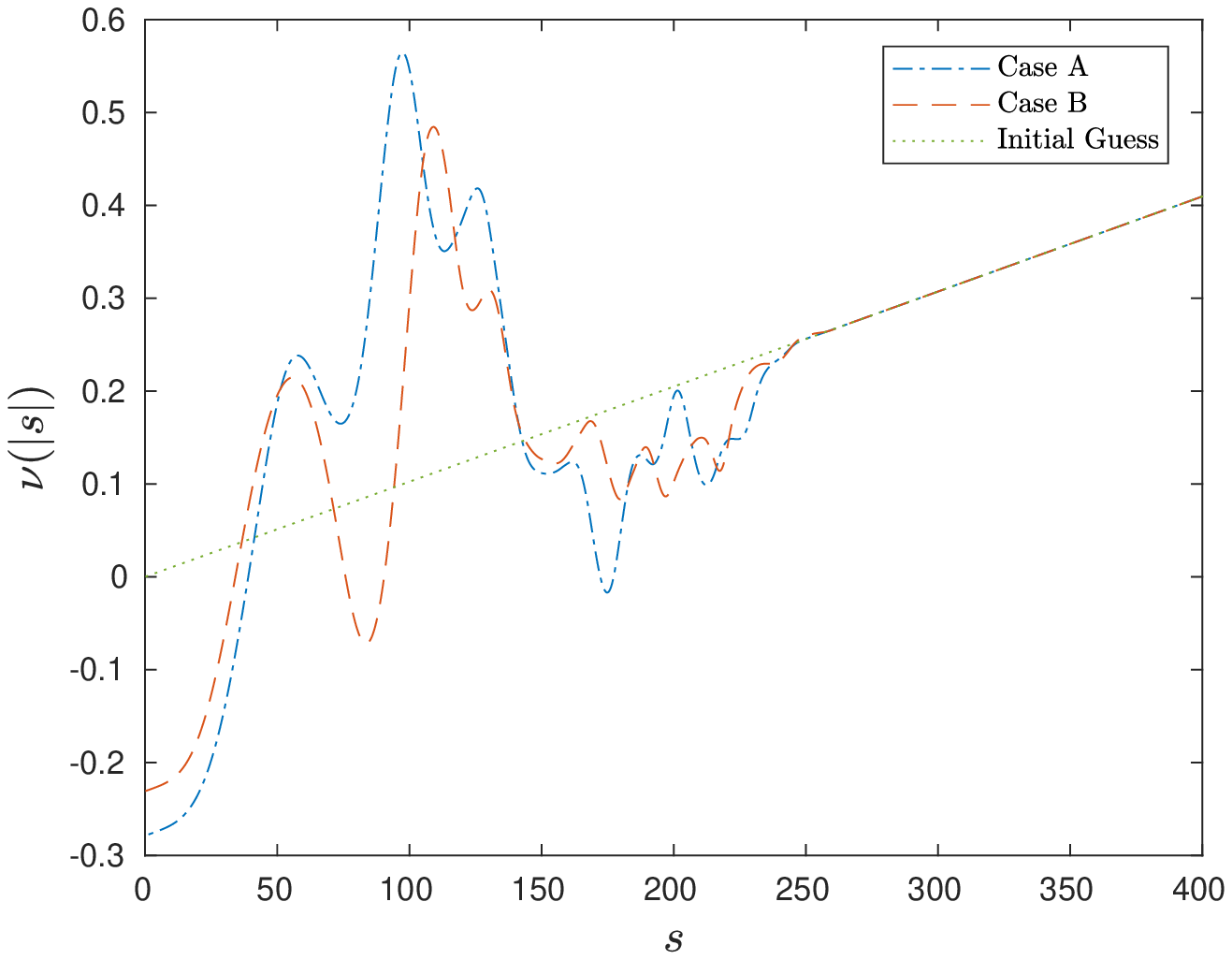}
    \label{fig:nud}
  }}
\caption{(a) Dependence of the normalized error functional
  $\J(\nu^{(n)}) / \J(\nu_0)$ on the iteration count $n$ and (b)
  dependence of the optimal eddy viscosity $\nud({|s|})$ on the
  resolved strain $s$ for Case A (blue dash-dotted line) and Case B
  (red dashed line), cf.~Table \ref{tab:summary}. In panel (b) we also
  indicate the Smagorinsky model \eqref{eq:nusKS} used as the initial
  guess $\nu_0$ (green dotted line).}
\end{figure}

In order to obtain insights about the spatio-temporal evolution of
solutions to the LES problems \eqref{eq:KSLES} with different closure
models (no closure at all, the Smagorinsky model \eqref{eq:nusKS}, and
the optimal eddy viscosity $\nud({|s|})$ from Cases A and B,
cf.~Table \ref{tab:summary}), in Figure \ref{fig:utx} these evolutions
are compared as functions of space and time to the DNS solution
$w(t,x)$ of the original Kuramoto-Sivashinsky system \eqref{KS}. In
this figure we also include the evolution obtained with the optimal
closure model proposed Das \& Moser \cite{das2002optimal} based on a
stochastic estimator. In order to assess the performance of the
proposed approach at times $t > T$ extending beyond the ``training
window'' $[0,T]$, the evolutions in Figure \ref{fig:utx} are shown for
$t \in [0,2T]$. We observe that with the exception of the LES solution
with no closure model, cf.~Figure \ref{fig:utx_NoClosure}, all LES
solutions are qualitatively quite similar to the DNS solution,
especially for short times, cf.~Figures
\ref{fig:utx_IGuess}--\ref{fig:utx_B} vs.~Figure \ref{fig:utx_DNS},
although the solutions obtained for Cases A and B arguably best
correlate with the DNS results.
\begin{figure}\centering
  \subfigure[]
  {
    % DNS
    \includegraphics[scale=0.44]{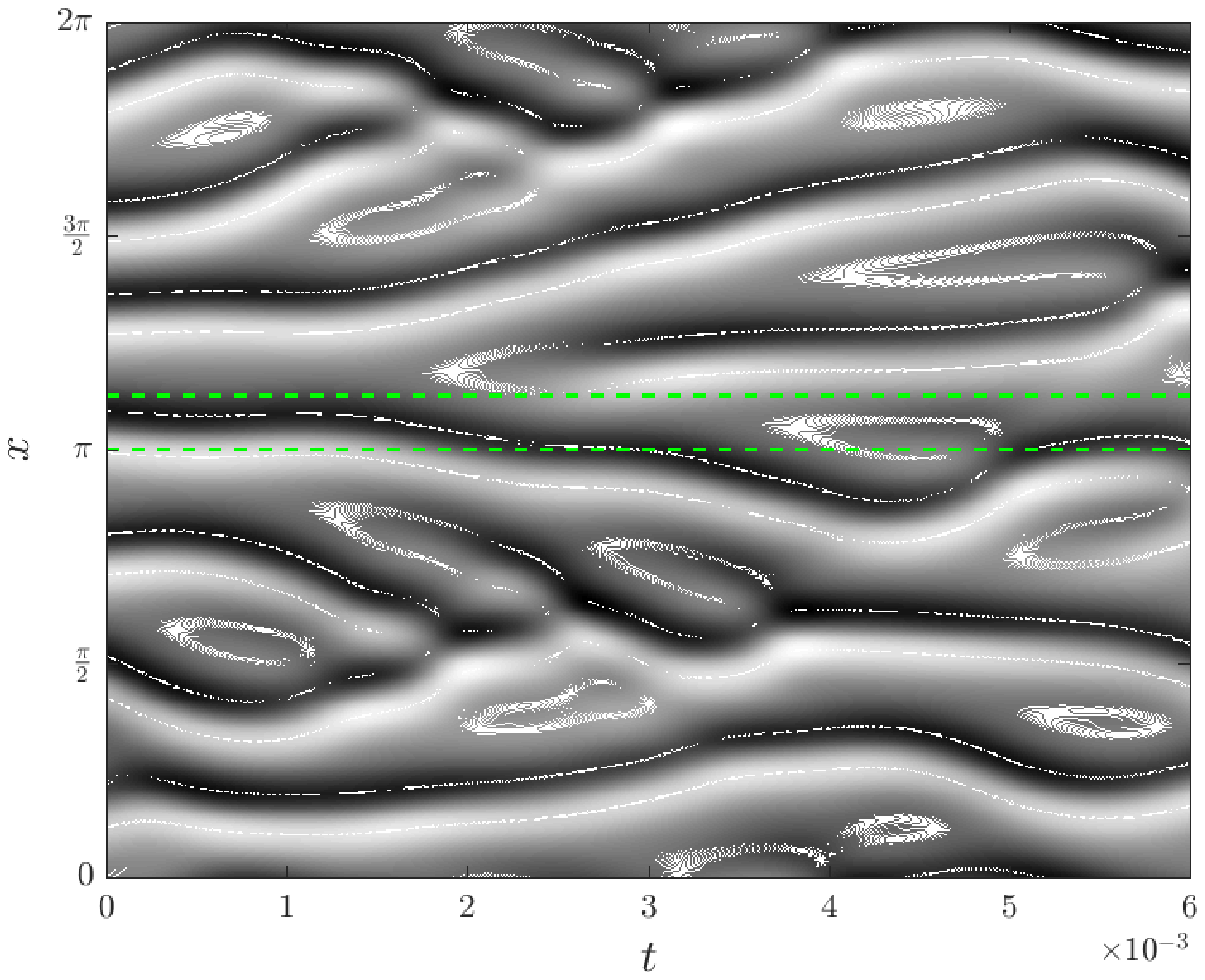}
    \label{fig:utx_DNS}
  }\qquad
  \subfigure[]
  {
    % LES with No Closure
    \includegraphics[scale=0.44]{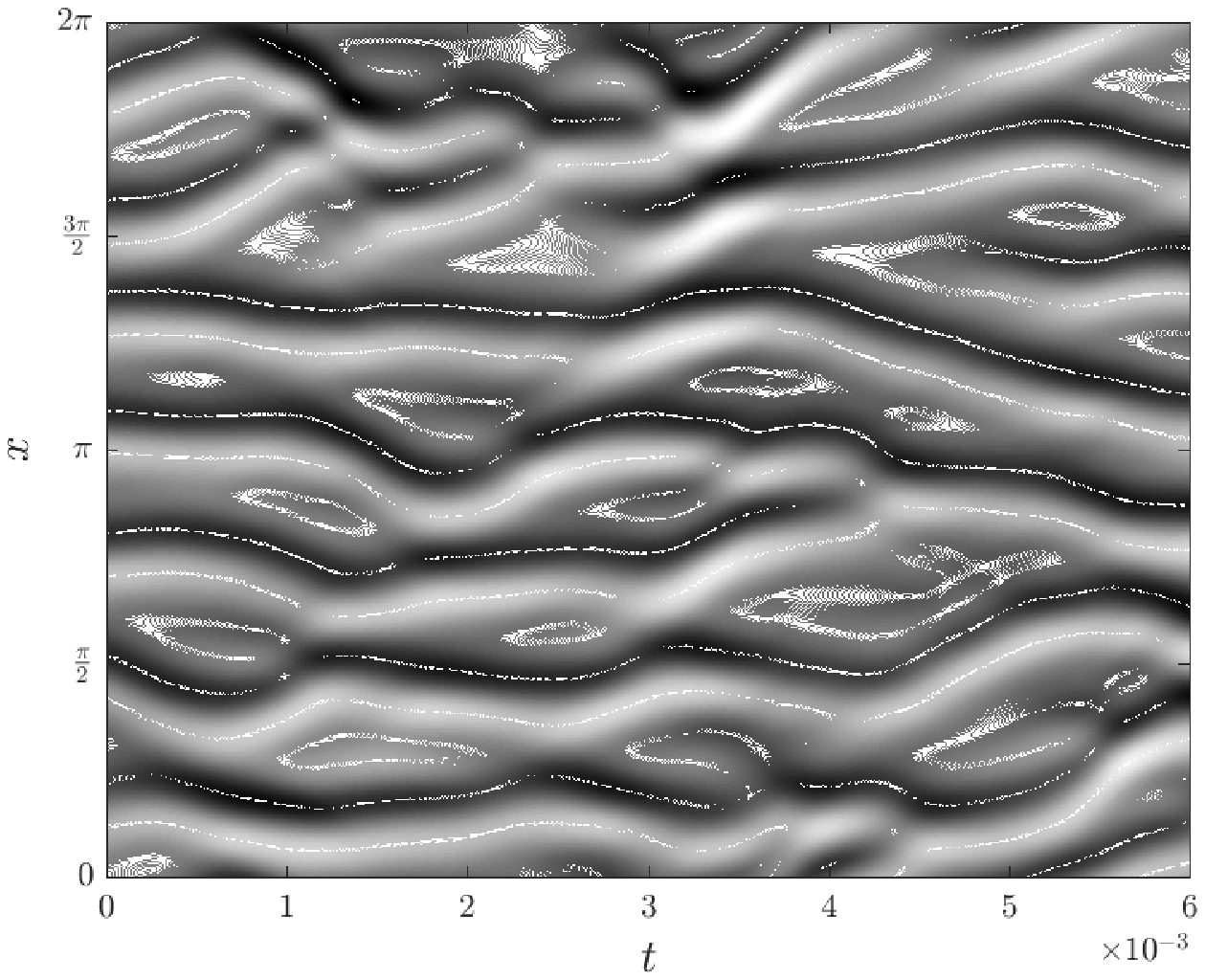}
    \label{fig:utx_NoClosure}
  }
  \subfigure[]
  {
    % LES with initial eddy viscosity
    \includegraphics[scale=0.44]{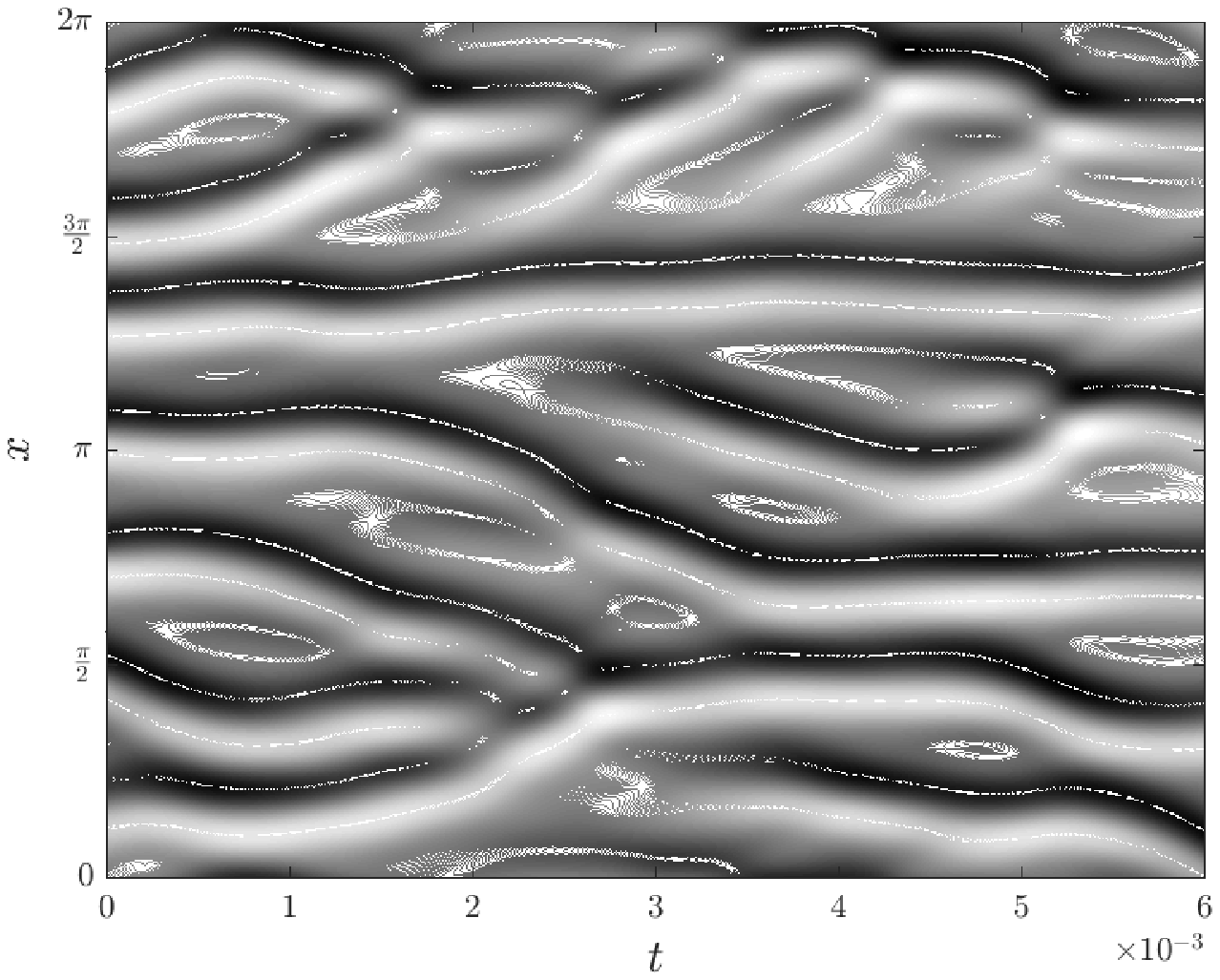}
    \label{fig:utx_IGuess}
  }\qquad
  \subfigure[]
  {
    % LES with Das & Moser closure
    \includegraphics[scale=0.44]{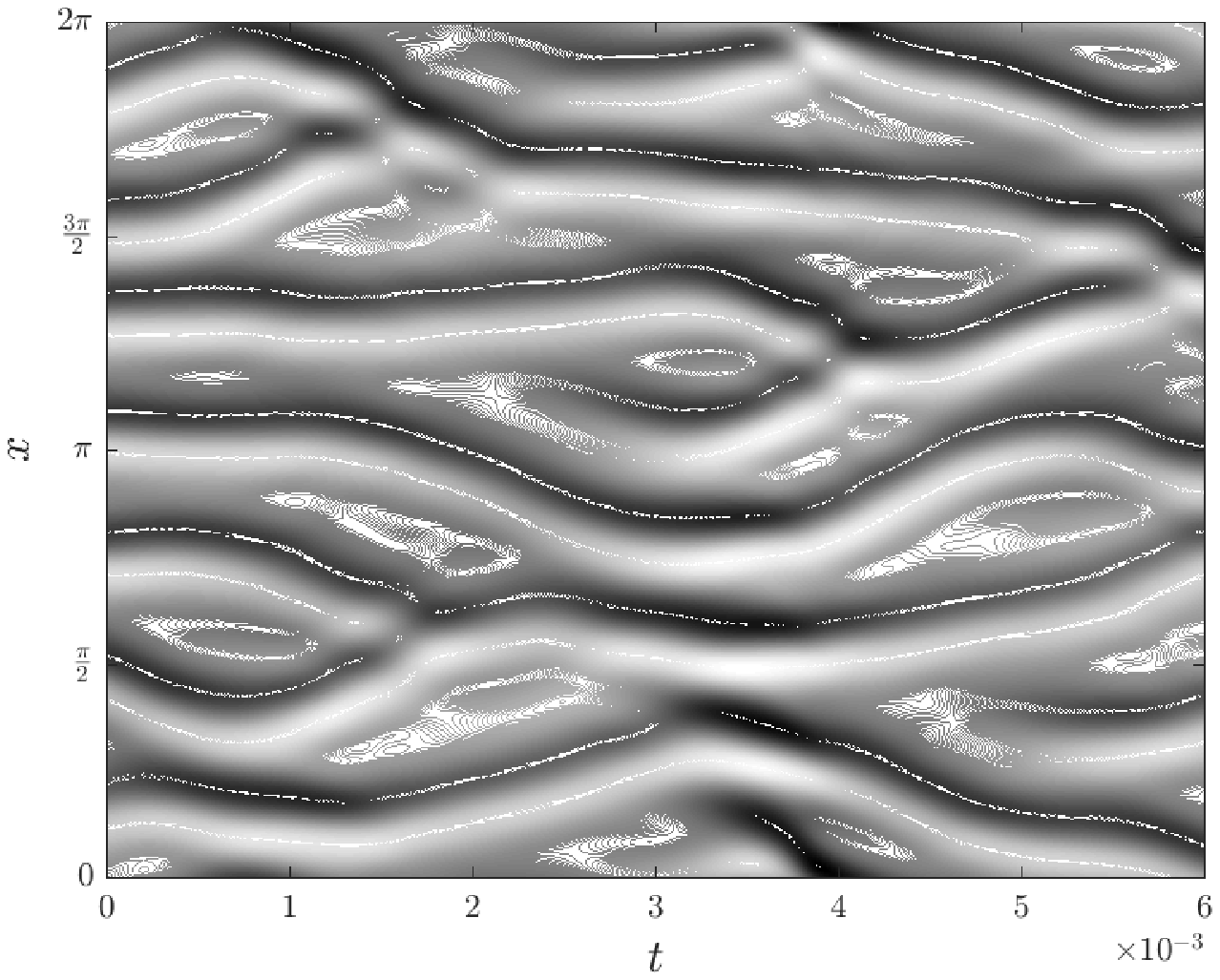}
    \label{fig:utx_Moser}
  }
  \subfigure[]
  {
    % LES with CASE A
    \includegraphics[scale=0.44]{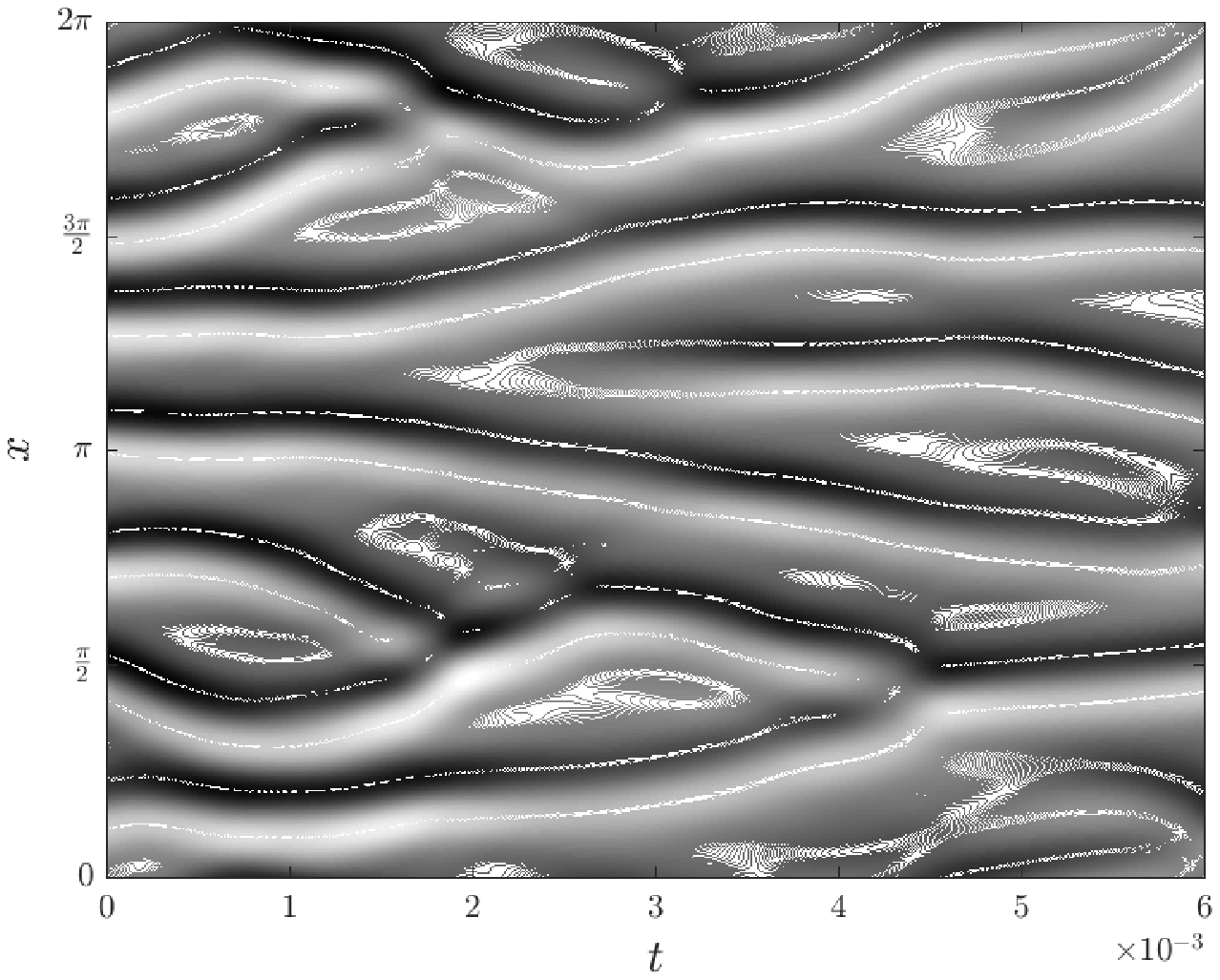}
    \label{fig:utx_A}
  }\qquad
  \subfigure[]
  {
    % LES with CASE B
    \includegraphics[scale=0.44]{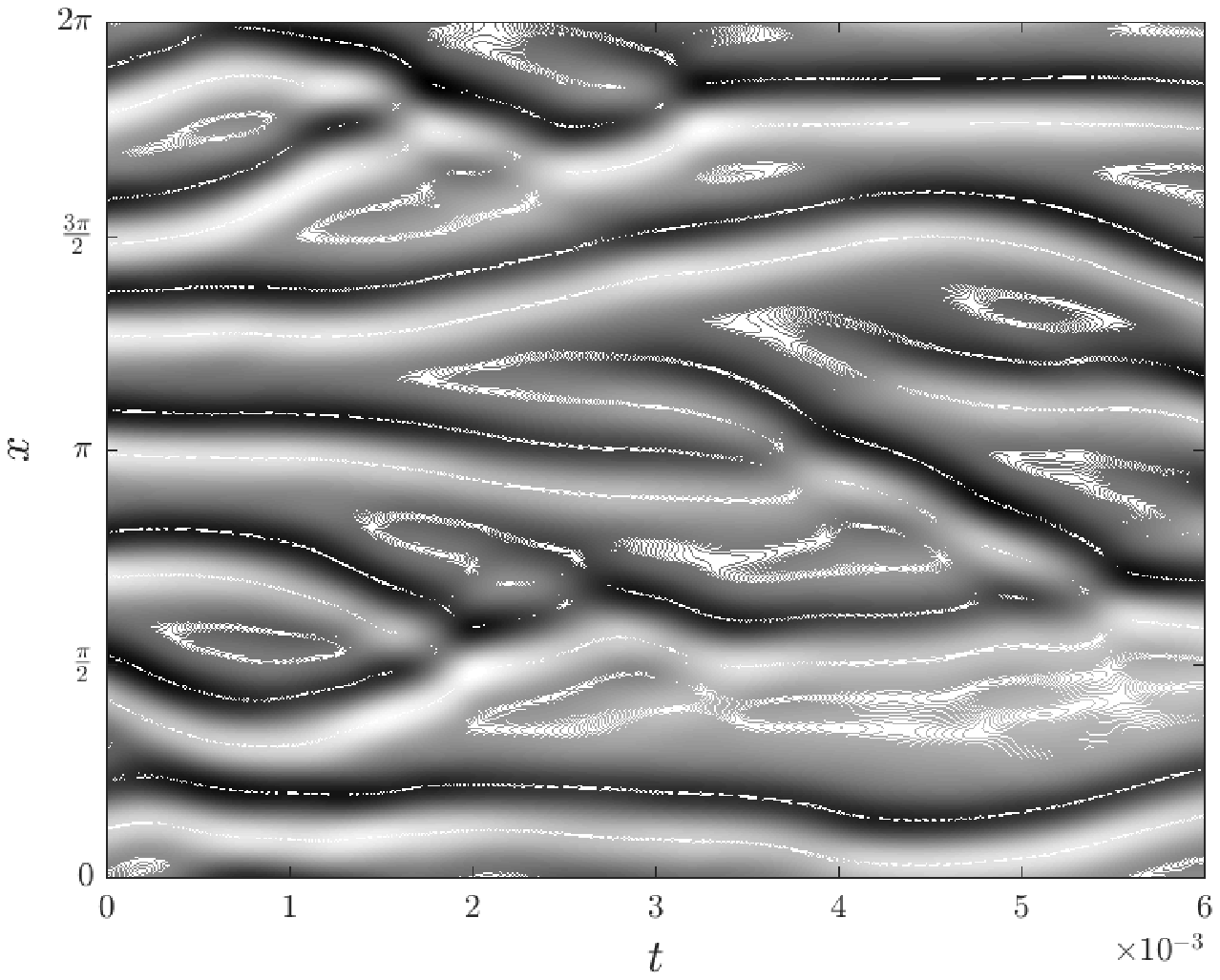}
    \label{fig:utx_B}
  }
  \caption{Space-time evolution of solutions to \subref{fig:utx_DNS}
    the DNS problem \eqref{KS} and the LES problems \eqref{eq:KSLES}
    with \subref{fig:utx_NoClosure} no closure model, with
    \subref{fig:utx_IGuess} the Smagorinsky model \eqref{eq:nusKS},
    with \subref{fig:utx_Moser} the optimal closure proposed by Das \&
    Moser \cite{das2002optimal}, with \subref{fig:utx_A} the optimal
    closure from case A, and with \subref{fig:utx_B} the optimal
    closure from case B. Grayscale indicates the solution value at
    given $(x,t)$. In panel \subref{fig:utx_DNS} the green, dashed
    horizontal lines indicate the cut-off length scale $\delta$
    characterizing filter \eqref{eq:hG}.}
\label{fig:utx}
\end{figure}

\begin{figure}\centering
  \subfigure[]
  {
    % First Diagnostic Quantity - First
    \includegraphics[scale=0.44]{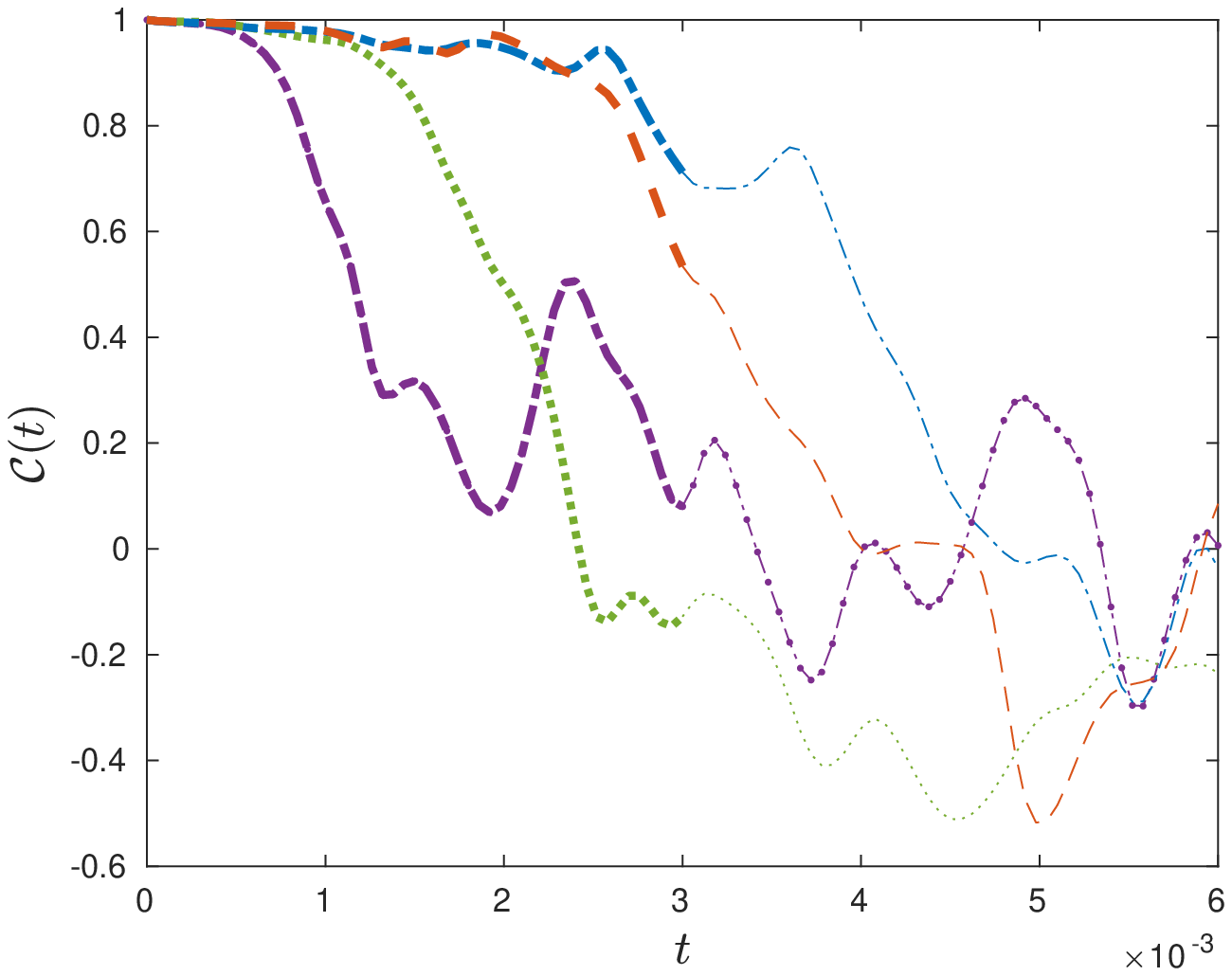}
    \label{fig:Err1_1}
  }\qquad\qquad
  \subfigure[]
  {
    % First Diagnostic Quantity - Second
    \includegraphics[scale=0.44]{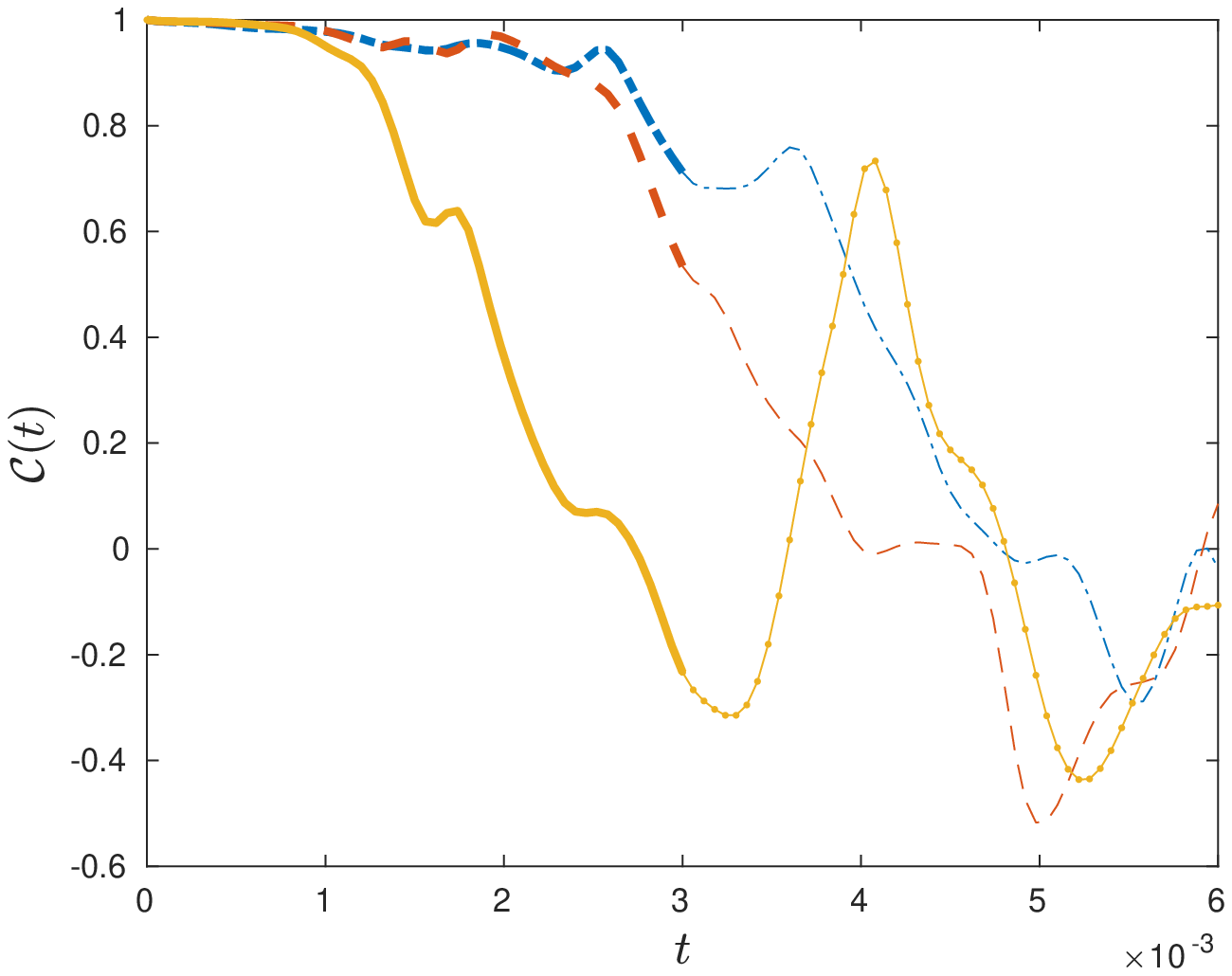}
    \label{fig:Err1_2}
  }
  \subfigure[]
  {
    % Kinetic Energy - First
    \includegraphics[scale=0.44]{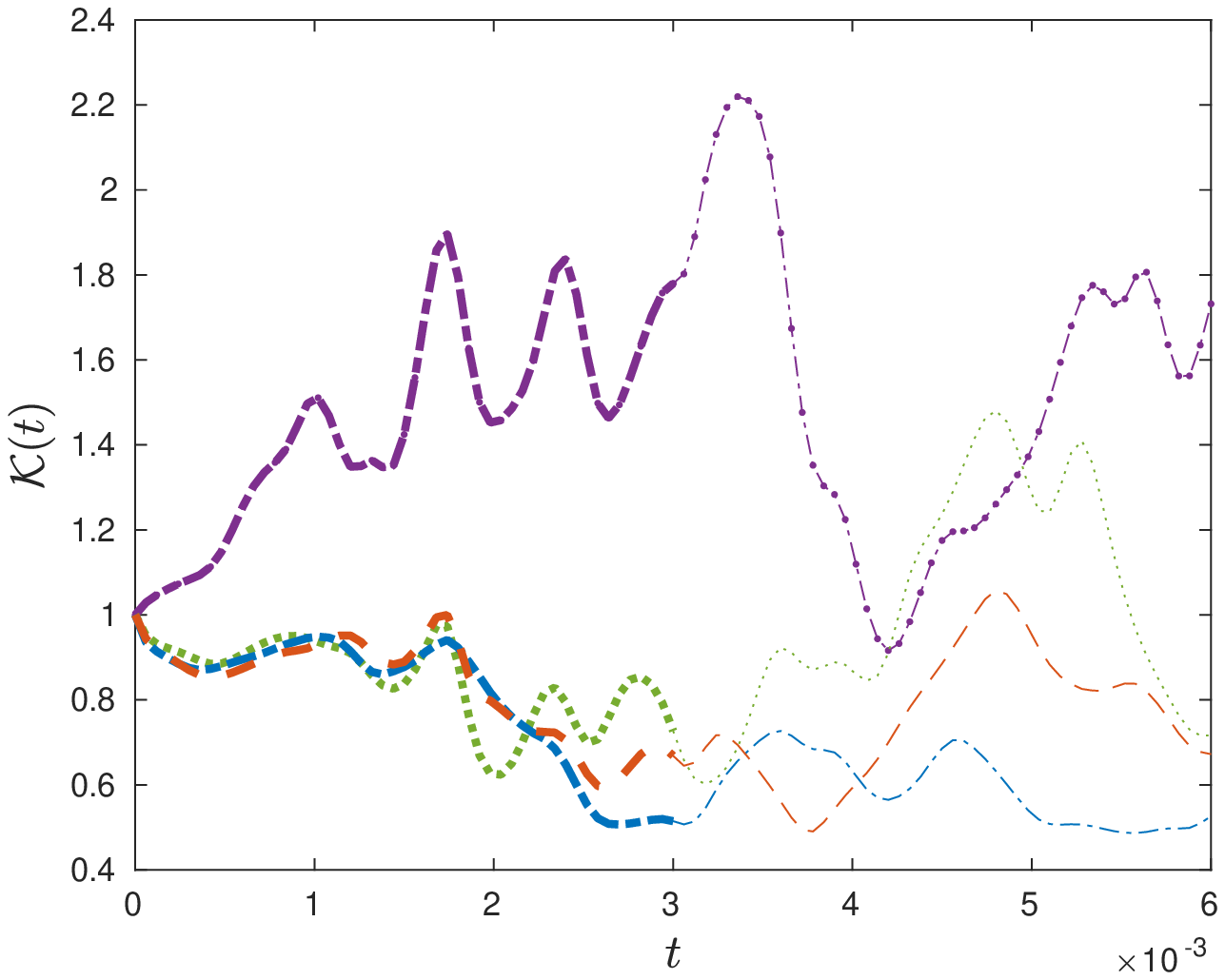}
    \label{fig:Kin_1}
  }\qquad\qquad
  \subfigure[]
  {
    % Kinetic Energy - Second
    \includegraphics[scale=0.44]{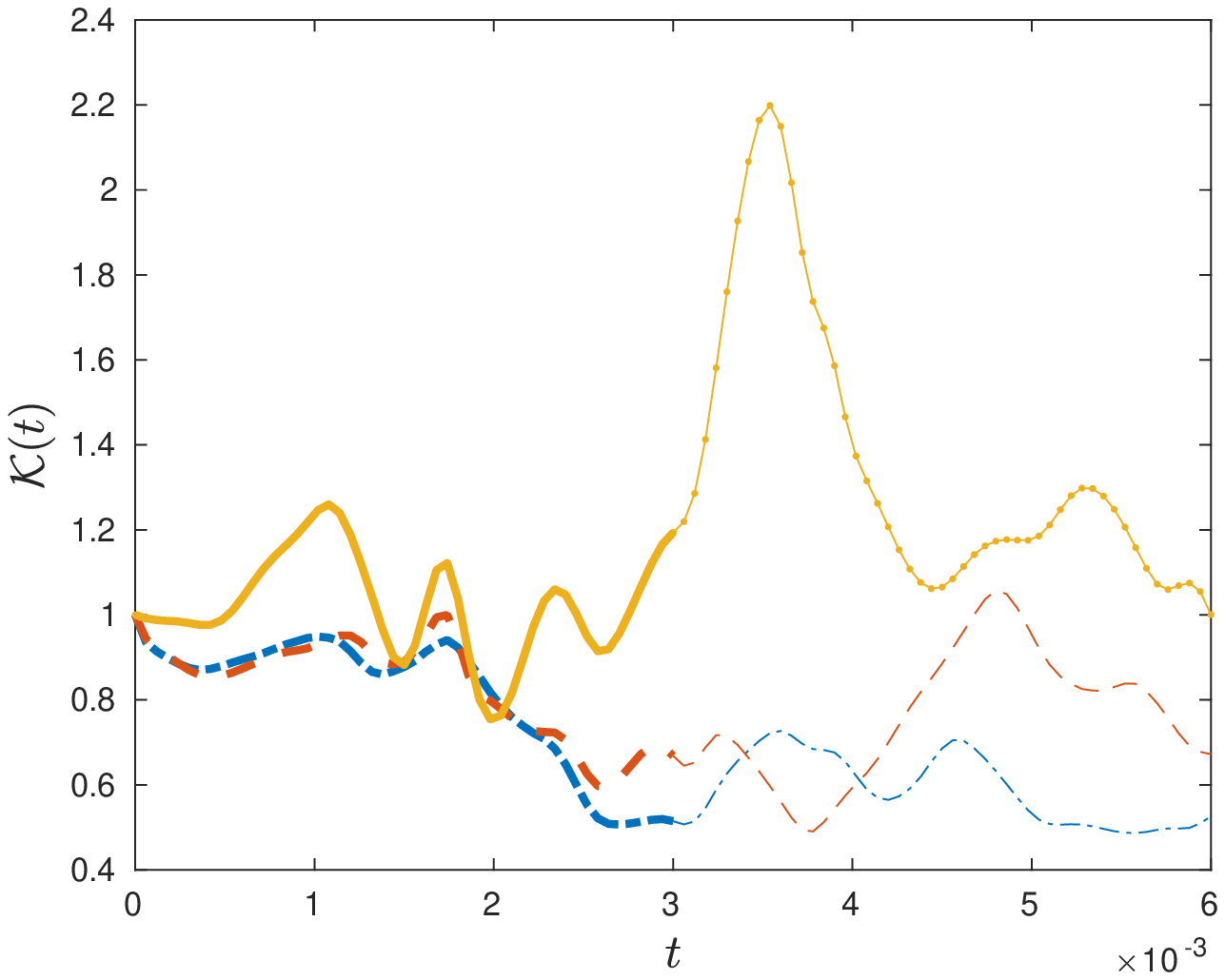}
    \label{fig:Kin_2}
  }
%  \subfigure[]
%  {
%    % Enstrophy - First
%    \includegraphics[scale=0.50]{}
%    \label{fig:H1_1}
%  }
%  \subfigure[]
%  {
%    % Enstrophy - Second
%    \includegraphics[scale=0.5]{}
%    \label{fig:H1_2}
%  }
  \caption{Dependence of the diagnostic quantities (a,b) $\C(t)$,
    cf.~\eqref{eq:C}, and (c,d) $\K(t)$,
    cf.~\eqref{eq:K}, % and (e,f) $\E(t)$, cf.~\eqref{eq:E},
    on time $t$ for the LES solutions with (purple dot-dashed line) no
    closure model, with (green dotted line) the Smagorinsky model
    \eqref{eq:nusKS}, with (blue dash-dotted line) the optimal eddy
    viscosity $\nud$ from case A, with (red dashed line) the optimal
    eddy viscosity $\nud$ from case B, and with (yellow solid-dotted
    line) the optimal closure proposed by Das \& Moser
    \cite{das2002optimal}. {Bold and thin lines correspond to time in
      the ``training window'' $t \in [0, T]$ and beyond the ``training
      window'' $t \in (T, 2T]$, respectively.}}
  \label{fig:diagn}
\end{figure}

To quantify these observations, we will now analyze the behavior of
the following two diagnostic quantities
\begin{subequations} 
\label{eq:diagn}
\begin{align}
\C(t) & := \frac{1}{||w(t)||_{L^2(0, 2\pi)} ||\tu(t)||_{L^2(0, 2\pi)} } \int_{0}^{2\pi} w(t, x) \, \tu(t, x) \ dx, \label{eq:C} \\
\K(t) & := \frac{||\tu(t)||^2_{L^2(0, 2\pi)}}{||w(t)||^2_{L^2(0, 2\pi)}} = \frac{1}{||w(t)||^2_{L^2(0, 2\pi)}} \, \int^{2\pi}_{0} \tu(t, x)^2 \, dx, \label{eq:K} %\\
%\E(t) & := \frac{|\tu(t)|^2_{H^1(0, 2\pi)}}{|w(t)|^2_{H^1(0, 2\pi)}} = \frac{1}{|w(t)|^2_{H^1(0, 2\pi)}} \int^{2\pi}_{0} \left(\p{\tu(t, x)}{x}\right)^2 \, dx, \label{eq:E} 
\end{align}
\end{subequations}
which can be interpreted as, respectively, the correlation of the LES
solution $\tu$ with the DNS solution $w$ and the normalized kinetic
energy. %and enstrophy {(in \eqref{eq:E} the symbol $|\cdot|_{H^1}$ denotes the $H^1$ seminorm).}
The {LES with the} optimal {eddy-viscosity} closures obtained in Cases
A and B are compared in terms these diagnostic quantities for $t \in
[0,2T]$ to the LES with no closure model, with the Smagorinsky model
\eqref{eq:nusKS}, and with the closure model of Das \& Moser
\cite{das2002optimal} in Figure \ref{fig:diagn}.  Given the chaotic
nature of the Kuramoto-Sivashinsky system \eqref{KS} resulting in an
exponentially fast divergence of initially nearby trajectories, in all
cases the correlation \eqref{eq:C} drops very rapidly, such that for
short times $t \rightarrow 0$ we approximately have $\C(t) \approx 1 -
r \, e ^{pt}$ for some $r,p > 0$ different in each case, cf.~Figures
\ref{fig:diagn}(a).  The growth rate $p$, characterizing the
exponential divergence of the solutions to the DNS and LES problems,
is smallest when the optimal {eddy-viscosity} closure {models
  $\nud(|s|)$ from cases A and B are} used. As a result, the time
$t_0$ when the DNS and the LES solutions become uncorrelated, i.e.,
when $\C(t_0) \approx 0$, is nearly twice as large for the optimal
{eddy-viscosity} closures from Cases A and B than for the Smagorinsky
model \eqref{eq:nusKS} and the closure model of Das \& Moser
\cite{das2002optimal}.  As regards the behavior of the normalized
energy \eqref{eq:K}, %--\eqref{eq:E},
from Figures \ref{fig:Kin_1}--\ref{fig:Kin_2} we see that the optimal
eddy viscosity $\nud({|s|})$ on average tends to reduce the kinetic
energy {relative to its levels in the DNS.}  This is in contrast to
the approach of Das \& Moser \cite{das2002optimal} in which the
normalized energy is increased to levels higher than in the DNS,
cf.~Figure \ref{fig:Kin_2}.

While the analysis above focused on ``a posteriori'' tests involving
the results of solving the LES system \eqref{eq:KSLES} with different
closure models, we close this section with a brief discussion of an
``a priori'' test where the errors in approximations of the SGS
stresses \eqref{eq:Mw} with different closure models are analyzed.
{In this context,} comparing the SGS dissipation rate with the
modeled SGS dissipation rate is often used as a {standard
  diagnostic} to assess the energetics in the LES flow
\cite{Meneveau1994}. We thus focus on the following normalized
least-squares measure of this error
\begin{equation} 
\mathcal{S}(t) := \frac{1}{||\p{\tu}{x} \, M(t)||^2_{L^2(0, 2\pi)}  } \int_{0}^{2\pi} \left[ \p{\tu}{x} \left[ M(w(t, x)) - M(\tu(t, x))\right]\right]^2 \ dx
\label{eq:S}
\end{equation}
and show its dependence on time $t \in [0,2T]$ for different models in
Figure \ref{fig:S}. We note in this figure that the normalized SGS
stress error $\mathcal{S}(t)$ tends to be quite large in all cases,
which is a known property of Smagorinsky-type models
\cite{GamaharaHattori2017}, and is smallest in the case of Das' \&
Moser's approach \cite{das2002optimal}.

\begin{figure}
\centering
  \includegraphics[scale=0.44]{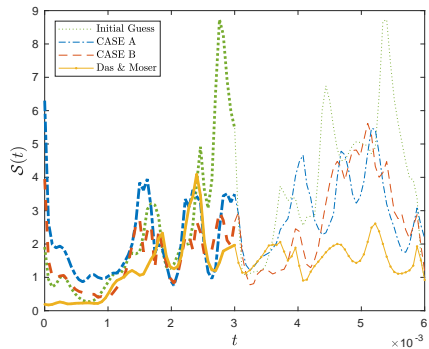}
  \caption{Dependence of the normalized SGS stress error
    $\mathcal{S}(t)$, cf.~\eqref{eq:S}, on time $t$ for the LES
    solutions with (green dotted line) the Smagorinsky model
    \eqref{eq:nusKS}, with (blue dash-dotted line) the optimal eddy
    viscosity $\nud$ from case A, with (red dashed line) the optimal
    eddy viscosity $\nud$ from case B, and with (yellow solid-dotted
    line) the optimal closure proposed by Das \& Moser
    \cite{das2002optimal}. {Bold and thin lines correspond to
      time in the ``training window'' $t \in [0, T]$ and beyond the
      ``training window'' $t \in (T, 2T]$, respectively.}}
\label{fig:S}
\end{figure}

%%%%%%%%%%%%%%%%%%%%%%%%%%%%%%%%%%%%%%%%%%%%%%%%%%%%%%%%%%%%%%%%%%%%%%%%%%%%
% the sixth chapter - Discussion
\section{Discussion and Conclusions} % Main section title
\label{sec:final}

In this study we have introduced a computational framework for
determining optimal {eddy-viscosity} closures for a broad class of PDE
models, such as the Navier-Stokes equation in hydrodynamics.  This
inverse problem is framed here as a PDE optimization problem where an
error functional representing the misfit between the target and
predicted observations is minimized with respect to the {functional
  form of the eddy viscosity in the closure relation} which determines
the ``shape'' of the nonlinearity of the model term.  Because of this
latter aspect, such a problem is not amenable to solution using
standard adjoint-based tools for PDE optimization and {an
  extension of the recently developed generalization} of these
techniques \cite{bukshtynov2013optimal,bukshtynov2011optimal} needs to
be used.  In addition, by formulating the problem in the
``optimize-then-discretize'' setting we are able to determine the
optimal {forms of the eddy viscosity in a very general manner} subject
only to some minimal assumptions on smoothness,
cf.~\eqref{eq:nue_reg}, and the behavior for small and large values of
the state variable.  {Solution of this problem using a
  ``discretize-then-optimize'' approach often employed to solve PDE
  optimization problems \cite{g03} is an interesting alternative. Such
  formulations} ought to be contrasted with some earlier approaches in
which optimal closures were determined by fitting a small number of
parameters in an assumed ansatz. Thus, our proposed approach does not
suffer from the limitations of such an assumed ansatz.

To provide a proof of the concept for this approach, we have focused
here on the 1D Kuramoto-Sivashinsky equation \eqref{KS} as a model
problem computationally more tractable than the 2D or 3D Navier-Stokes
system we are ultimately interested in. In being chaotic and
multiscale \cite{MR2920624}, the solutions of the Kuramoto-Sivashinsky
system arguably better resemble actual hydrodynamic turbulence than
solutions of the Burgers equation often used as a simplified model in
similar situations \cite{das2002optimal}. Such simplified setting
allows us to study the properties of our proposed approach more
thoroughly. We find that it is possible to determine a particular
dependence of the eddy viscosity on the resolved strain, $\nud =
\nud(|s|)$, generalizing Smagorinsky's relation \eqref{eq:nusKS} in a
LES model, such that the LES model provides a systematically
{improved} approximation of the DNS.  More precisely, while the
trajectories corresponding to the DNS and LES still diverge
exponentially, they do so at a much slower rate than in the case of
the LES based on the standard Smagorinsky model, cf.~Figure
\ref{fig:diagn}(a). Importantly, the functional form of the optimal
eddy viscosity $\nud(|s|)$ appears to little depend on the particular
form of the observations used to set up the optimization problem,
cf.~Figure \ref{fig:nud}, and we emphasize here that our approach can
be formulated based on very general measurements (changing the
observation operator $H_i$, cf.~Section \ref{sec:H}, will only result
in a modification of the source term in the adjoint system
\eqref{eq:aKSLESa}). In particular, it is also possible to
simultaneously use several different sets of measurements coming, for
example, from different DNS or experiments, and a natural way to
formulate such a problem is in terms of multiobjective optimization.

The LES with the optimal {eddy-viscosity} closure models
determined with the proposed approach produces solutions which better
match the DNS than the LES solutions obtained with the closure model
of Das \& Moser \cite{das2002optimal}, cf.~Figure \ref{fig:diagn}(b).
On the other hand, that latter model leads to a more accurate
prediction of the SGS stresses, cf.~Figure \ref{fig:S}. This can be
understood by recognizing that the closure model of Das \& Moser is
formulated to optimally reconstruct the SGS stresses rather than some
other a posteriori quantities. We have also considered {a formulation}
with observation operators involving SGS stresses, but the results
obtained were inferior to the results presented here. {In our
  computations the numerical parameters $N_x$, $N_s$ and $\Delta t$
  were chosen such that the LES and the adjoint systems
  \eqref{eq:KSLES} and \eqref{eq:aKSLES}, the gradient expression
  \eqref{eq:gradL2} as well as the system
  \eqref{eq:gradH3bc}--\eqref{eq:gradH3} for determining the Sobolev
  gradients were fully resolved, cf.~Section \ref{sec:disc}. The effect
  of insufficient numerical resolution on the computed optimal
  eddy-viscosity closures is an important question which merits
  investigation, however, the answer will likely be problem
  dependent.}

In addition to providing optimal {eddy-viscosity} closure models
which can be useful in many situations, the present approach serves
another, more {basic} purpose, namely, by identifying the ``best''
closure models {within a certain family} it can offer information
about {their} fundamental performance limitations. {More
  specifically,} it can provide insights about how well closure models
based on the eddy-viscosity ansatz \eqref{eq:Sm} can perform in the
best case and thus how much room there is in principle for improvement
of standard approaches such as the Smagorinsky model \eqref{eq:nusKS}.

Moving forward, our {next} main goal is to consider an analogous
problem of finding optimal {eddy-viscosity} closure models for
the 2D and 3D Navier-Stokes system as generalizations of the
Smagorinsky model \eqref{eq:Sm}, first in the periodic setting and
then in more realistic geometries. {In that latter context related
  questions also arise as regards wall models.}  We are also
interested in studying closure models based on formulations other than
eddy viscosity.

%%%%%%%%%%%%%%%%%%%%%%%%%%%%%%%%%%%%%%%%%%%%%%%%%%%%%%%%%%%%%%%%%%%%%%%%%%%%

\section*{Acknowledgments}

The authors acknowledge partial funding through an NSERC (Canada)
Discovery Grant. This work was supported in part by the Research and
High Performance Computing Support (RHPCS) group at McMaster
University.

%%%%%%%%%%%%%%%%%%%%%%%%%%%%%%%%%%%%%%%%%%%%%%%%%%%%%%%%%%%%%%%%%%%%%%%%%%%%
% Appendix 1 - Theorem proof
\appendix 

\section{Proof of Theorem \ref{thm:gradL2}} % Main section title
\label{sec:grad}

Here we derive expression \eqref{eq:gradL2} for the $L^2$ gradient of
the cost functional \eqref{eq:J}. In order to avoid {technical}
complications related to the non-differentiability of the absolute
value $|\cdot|$ in the argument of the eddy viscosity, we change the
variable from $|s|$ to $\sigma := |s|^2 = s^2$ (for simplicity and
with a slight abuse of notations we will still use {the} same
symbol $\nu = \nu(\sigma)$ to denote the eddy viscosity).  First, we
must determine the perturbation of the LES system \eqref{eq:KSLES}
resulting from perturbing the functional form of the eddy viscosity
$\nu$ with some perturbation $\nu'$. This is done by replacing $\tu$
and $\nu$ with the following representations in which $0 < \epsilon
\ll 1$
\begin{subequations}
\label{eq:pert}
\begin{align} 
\tu \  & \ \longleftarrow \tu + \epsilon \, \tu' + \mathcal{O}(\epsilon^2), \label{eq:pert_w} \\
\nu(\sigma) \  & \ \longleftarrow  \nu(\sigma) + \epsilon \, \frac{d \nu(\sigma)}{d\sigma} \frac{d\sigma}{ds} \p{\tu'}{x} + \epsilon \, \nu'(\sigma) + \mathcal{O}(\epsilon^2).  \label{eq:pert_nu}
\end{align}
\end{subequations}
The second term on the RHS in \eqref{eq:pert_nu} reflects the change
of the value of the (squared) strain $\sigma$ for which the eddy
viscosity is evaluated as a result of perturbing its functional form
\cite{bukshtynov2013optimal,bukshtynov2011optimal}. Substituting
representations \eqref{eq:pert_w}--\eqref{eq:pert_nu} into the LES
system \eqref{eq:KSLES} and collecting terms of
$\mathcal{O}(\epsilon)$ we obtain the following perturbation system
\begin{subequations}
\label{eq:dKSLES}
\begin{align} 
\p{\tu'}{t} + \nu_4 \, \pppp{\tu'}{x} + \nu_2 \Big[\pp{\tu'}{x} + \p{{(\tu \tu')}}{x}\Big] & 
+ \p{}{x} \Big[ \frac{d\nu(\sigma)}{d\sigma} \frac{d\sigma}{ds} \ppp{\tu}{x} \p{\tu'}{x} + \nu \, \ppp{\tu'}{x} \Big] = -\p{}{x} \Big[\nu' \ppp{\tu}{x} \Big], \label{eq:dKSLESa} \\
\frac{\partial^{(i)} \tu'}{\partial x^{(i)}}(t, 0) &= \frac{\partial^{(i)} \tu'}{\partial x^{(i)}}(t, 2\pi), \quad \quad i = 0, \dotsc, 3,  \label{eq:dKSLESb} \\
\tu'(0, x) &= 0 \label{eq:dKSLESc}
\end{align}
\end{subequations}
describing the leading-order effect $\tu'$ of perturbing the functional
form of the eddy viscosity $\nu(\sigma)$ on solutions of the LES
system \eqref{eq:KSLES}
\cite{bukshtynov2013optimal,bukshtynov2011optimal}.  Now we integrate
\eqref{eq:dKSLESa} against the adjoint field $\tu^*$ over
the space-time domain $[0,2\pi] \times [0,T]$ and then perform
integration by parts with respect to both space and time to obtain
\begin{align*} 
%  \label{perturb_int0}
&0 = \int_0^T \int_0^{2\pi} \bigg[\p{\tu'}{t} + \nu_4 \, \pppp{\tu'}{x} + \nu_2 \Big[\pp{\tu'}{x} + \p{{(\tu \, \tu')}}{x}\Big] \bigg] \, \tu^* \, dx \, dt \nonumber \\ & \ \  
+ \int_0^T \int_0^{2\pi} \p{}{x} \bigg[\frac{d\nu}{d\sigma} \, \frac{d\sigma}{ds} \, \ppp{\tu}{x} \, \p{\tu'}{x} + \nu \, \ppp{\tu'}{x} + \nu' \, \ppp{\tu}{x} \ \bigg] \, \tu^* \, dx \, dt \\
% Second expression
= &  \underbrace{\int_0^T \int_0^{2\pi} \left[-\p{\tu^*}{t} + \nu_4 \pppp{\tu^*}{x} + \nu_2 \left[\pp{\tu^*}{x} - \tu\p{\tu^*}{x}\right] 
 + \p{}{x}\left[\frac{d\nu}{d\sigma} \frac{d\sigma}{ds} \ppp{\tu}{x} \p{\tu^*}{x}\right] + \ppp{}{x} \left[\nu \p{\tu^*}{x} \right] \right] \, \tu' \, dx \, dt}_{\J'(\nu;\nu')} \nonumber \\
& -\int_0^T \int_0^{2\pi} \p{\tu^*}{x} \, \ppp{\tu}{x} \, \nu' \, dx \, dt = 0,  \nonumber
\end{align*}
where all the boundary terms vanish due to periodicity. Using the
definition of the adjoint system \eqref{eq:aKSLES} together with the
aforementioned change of variables $\sigma = \sigma(s)$ we then obtain
for the G\^ateaux differential $\J'(\nu;\nu') = \int_0^T \int_0^{2\pi}
\p{\tu^*}{x} \, \ppp{\tu}{x} \, \nu' \, dx \, dt$, which now contains
the perturbation $\nu'$ as a factor, but is still not in the Riesz
form \eqref{eq:Riesz} because this form involves an inner product
defined with integration with respect to the resolved strain $s$
rather than space $x$ and time $t$ (we shall now return {back} to the
original variable via the substitution $\sigma = s^2$). The required
change of variables is introduced by the representation
\cite{bukshtynov2013optimal,bukshtynov2011optimal}
{$\nu'\Big(\Big|\frac{\partial \tu(t, x)}{\partial x}\Big|\Big) =
  \int_a^b \delta\Big(\Big|\frac{\partial \tu(t, x)}{\partial x}\Big|
  - s\Big) \, \nu'(s) \, ds$,} where $\delta(\cdot)$ is the Dirac
delta distribution, such that using Fubini's theorem to swap the order
of integration we finally arrive at a Riesz representation of the
G\^ateaux differential \eqref{eq:dJ}
\begin{align}
\J'(\nu; \nu') &= \int_0^T \int_0^{2\pi} \p{\tu^*(t, x)}{x} \, \int_{a}^{b} \delta\Big(\Big|\frac{\partial \tu(t, x)}{\partial x}\Big| - s\Big) \, \ppp{\tu(t, x)}{x} \, \nu'( s ) \, ds \, dx \, dt \nonumber \\
&= \int_{a}^{b} \left[\int_{0}^T \int_{0}^{2\pi} \p{\tu^*(t, x)}{x} \, \delta\Big(\Big|\frac{\partial \tu(t, x)}{\partial x}\Big| - s\Big) \, \ppp{\tu(t, x)}{x} \, dx \, dt\right] \nu'( s ) \, ds,
\label{eq:dJ2}
\end{align}
from which after selecting $\X = L^2(\LL)$ in \eqref{eq:Riesz} we
deduce the following expression for the $L^2$ gradient
\begin{equation} 
\label{eq:gradL2sig}
\grad_{\nu}^{L^2}\mathcal{J}(s) 
= \int_{0}^T \int_{0}^{2\pi} \p{\tu^*(t, x)}{x} \, \delta\Big(\Big|\frac{\partial \tu(t, x)}{\partial x}\Big| - s\Big) \, \ppp{\tu(t, x)}{x} \, dx \, dt.
\end{equation}
We note that evaluation of this expression for a given value of $s$
requires computation of an integral defined on the level set
$\Big|\frac{\partial \tu(t, x)}{\partial x}\Big| = s$ in the
space-time domain $[0,2\pi] \times [0,T]$ which is rather difficult. A
computationally more convenient approach is obtained using the
following identity (in which the differentiation is understood in the
distributional sense) {$\delta\Big( \Big|\frac{\partial \tu(t,
    x)}{\partial x}\Big| - s \Big) = -
  \frac{d}{ds}{\Xi}_{\left[\alpha, \big|\partial \tu(t, x) /
      \partial x\big|\right]}(s)$}, such that \eqref{eq:gradL2sig}
becomes
\begin{equation} 
\label{eq:gradL2sig2}
\grad_{\nu}^{L^2}\J(s) = 
- \frac{d}{ds}\int_{0}^T \int_{0}^{2\pi} {\Xi}_{\left[\alpha, \big|\frac{\partial \tu(t, x)}{\partial x}\big| \right]}(s) \, \p{\tu^*(t, x)}{x} \, \ppp{\tu(t, x)}{x} \, dx \, dt,
\end{equation}
and expression \eqref{eq:gradL2} is finally obtained.

%%%%%%%%%%%%%%%%%%%%%%%%%%%%%%%%%%%%%%%%%%%%%%%%%%%%%%%%%%%%%%%%%%%%%%%%%%%%
% References

%\section*{References}

%\bibliographystyle{siamplain_custom}
%\bibliography{OptClosures,../Bib/all,../Bib/allPROTAS,../Bib/bib_vortex_ring,../Bib/control}

\end{document}